\newtheorem{cor}{Corollary}[section]
\newtheorem{lem}{Lemma}[section]
\newtheorem{prop}{Proposition}[section]
\newtheorem{defn}{Definition}[section]
\newtheorem{thm}{Theorem}
\newtheorem{rem}{Remark}
\newtheorem{notation}{Notation}
\numberwithin{equation}{section}
\begin{document}
\newcommand{\beqa}{\begin{eqnarray}}
\newcommand{\eeqa}{\end{eqnarray}}
\newcommand{\thmref}[1]{Theorem~\ref{#1}}
\newcommand{\secref}[1]{Sect.~\ref{#1}}
\newcommand{\lemref}[1]{Lemma~\ref{#1}}
\newcommand{\propref}[1]{Proposition~\ref{#1}}
\newcommand{\corref}[1]{Corollary~\ref{#1}}
\newcommand{\remref}[1]{Remark~\ref{#1}}
\newcommand{\er}[1]{(\ref{#1})}
\newcommand{\nc}{\newcommand}
\newcommand{\rnc}{\renewcommand}

\nc{\cal}{\mathcal}

\nc{\goth}{\mathfrak}
\rnc{\bold}{\mathbf}
\renewcommand{\frak}{\mathfrak}
\renewcommand{\Bbb}{\mathbb}

\newcommand{\id}{\text{id}}
\nc{\Cal}{\mathcal}
\nc{\Xp}[1]{X^+(#1)}
\nc{\Xm}[1]{X^-(#1)}
\nc{\on}{\operatorname}
\nc{\ch}{\mbox{ch}}
\nc{\Z}{{\bold Z}}
\nc{\J}{{\mathcal J}}
\nc{\C}{{\bold C}}
\nc{\Q}{{\bold Q}}
\renewcommand{\P}{{\mathcal P}}
\nc{\N}{{\Bbb N}}
\nc\beq{\begin{equation}}
\nc\enq{\end{equation}}
\nc\lan{\langle}
\nc\ran{\rangle}
\nc\bsl{\backslash}
\nc\mto{\mapsto}
\nc\lra{\leftrightarrow}
\nc\hra{\hookrightarrow}
\nc\sm{\smallmatrix}
\nc\esm{\endsmallmatrix}
\nc\sub{\subset}
\nc\ti{\tilde}
\nc\nl{\newline}
\nc\fra{\frac}
\nc\und{\underline}
\nc\ov{\overline}
\nc\ot{\otimes}
\nc\bbq{\bar{\bq}_l}
\nc\bcc{\thickfracwithdelims[]\thickness0}
\nc\ad{\text{\rm ad}}
\nc\Ad{\text{\rm Ad}}
\nc\Hom{\text{\rm Hom}}
\nc\End{\text{\rm End}}
\nc\Ind{\text{\rm Ind}}
\nc\Res{\text{\rm Res}}
\nc\Ker{\text{\rm Ker}}
\rnc\Im{\text{Im}}
\nc\sgn{\text{\rm sgn}}
\nc\tr{\text{\rm tr}}
\nc\Tr{\text{\rm Tr}}
\nc\supp{\text{\rm supp}}
\nc\card{\text{\rm card}}
\nc\bst{{}^\bigstar\!}
\nc\he{\heartsuit}
\nc\clu{\clubsuit}
\nc\spa{\spadesuit}
\nc\di{\diamond}
\nc\cW{\cal W}
\nc\cG{\cal G}
\nc\al{\alpha}
\nc\bet{\beta}
\nc\ga{\gamma}
\nc\de{\delta}
\nc\ep{\epsilon}
\nc\io{\iota}
\nc\om{\omega}
\nc\si{\sigma}
\rnc\th{\theta}
\nc\ka{\kappa}
\nc\la{\lambda}
\nc\ze{\zeta}

\nc\vp{\varpi}
\nc\vt{\vartheta}
\nc\vr{\varrho}

\nc\Ga{\Gamma}
\nc\De{\Delta}
\nc\Om{\Omega}
\nc\Si{\Sigma}
\nc\Th{\Theta}
\nc\La{\Lambda}

\nc\boa{\bold a}
\nc\bob{\bold b}
\nc\boc{\bold c}
\nc\bod{\bold d}
\nc\boe{\bold e}
\nc\bof{\bold f}
\nc\bog{\bold g}
\nc\boh{\bold h}
\nc\boi{\bold i}
\nc\boj{\bold j}
\nc\bok{\bold k}
\nc\bol{\bold l}
\nc\bom{\bold m}
\nc\bon{\bold n}
\nc\boo{\bold o}
\nc\bop{\bold p}
\nc\boq{\bold q}
\nc\bor{\bold r}
\nc\bos{\bold s}
\nc\bou{\bold u}
\nc\bov{\bold v}
\nc\bow{\bold w}
\nc\boz{\bold z}

\nc\ba{\bold A}
\nc\bb{\bold B}
\nc\bc{\bold C}
\nc\bd{\bold D}
\nc\be{\bold E}
\nc\bg{\bold G}
\nc\bh{\bold H}
\nc\bi{\bold I}
\nc\bj{\bold J}
\nc\bk{\bold K}
\nc\bl{\bold L}
\nc\bm{\bold M}
\nc\bn{\bold N}
\nc\bo{\bold O}
\nc\bp{\bold P}
\nc\bq{\bold Q}
\nc\br{\bold R}
\nc\bs{\bold S}
\nc\bt{\bold T}
\nc\bu{\bold U}
\nc\bv{\bold V}
\nc\bw{\bold W}
\nc\bz{\bold Z}
\nc\bx{\bold X}

\nc\ca{\mathcal A}
\nc\cb{\mathcal B}
\nc\cc{\mathcal C}
\nc\cd{\mathcal D}
\nc\ce{\mathcal E}
\nc\cf{\mathcal F}
\nc\cg{\mathcal G}
\rnc\ch{\mathcal H}
\nc\ci{\mathcal I}
\nc\cj{\mathcal J}
\nc\ck{\mathcal K}
\nc\cl{\mathcal L}
\nc\cm{\mathcal M}
\nc\cn{\mathcal N}
\nc\co{\mathcal O}
\nc\cp{\mathcal P}
\nc\cq{\mathcal Q}
\nc\car{\mathcal R}
\nc\cs{\mathcal S}
\nc\ct{\mathcal T}
\nc\cu{\mathcal U}
\nc\cv{\mathcal V}
\nc\cz{\mathcal Z}
\nc\cx{\mathcal X}
\nc\cy{\mathcal Y}

\nc\e[1]{E_{#1}}
\nc\ei[1]{E_{\delta - \alpha_{#1}}}
\nc\esi[1]{E_{s \delta - \alpha_{#1}}}
\nc\eri[1]{E_{r \delta - \alpha_{#1}}}
\nc\ed[2][]{E_{#1 \delta,#2}}
\nc\ekd[1]{E_{k \delta,#1}}
\nc\emd[1]{E_{m \delta,#1}}
\nc\erd[1]{E_{r \delta,#1}}

\nc\ef[1]{F_{#1}}
\nc\efi[1]{F_{\delta - \alpha_{#1}}}
\nc\efsi[1]{F_{s \delta - \alpha_{#1}}}
\nc\efri[1]{F_{r \delta - \alpha_{#1}}}
\nc\efd[2][]{F_{#1 \delta,#2}}
\nc\efkd[1]{F_{k \delta,#1}}
\nc\efmd[1]{F_{m \delta,#1}}
\nc\efrd[1]{F_{r \delta,#1}}

\nc\fa{\frak a}
\nc\fb{\frak b}
\nc\fc{\frak c}
\nc\fd{\frak d}
\nc\fe{\frak e}
\nc\ff{\frak f}
\nc\fg{\frak g}
\nc\fh{\frak h}
\nc\fj{\frak j}
\nc\fk{\frak k}
\nc\fl{\frak l}
\nc\fm{\frak m}
\nc\fn{\frak n}
\nc\fo{\frak o}
\nc\fp{\frak p}
\nc\fq{\frak q}
\nc\fr{\frak r}
\nc\fs{\frak s}
\nc\ft{\frak t}
\nc\fu{\frak u}
\nc\fv{\frak v}
\nc\fz{\frak z}
\nc\fx{\frak x}
\nc\fy{\frak y}

\nc\fA{\frak A}
\nc\fB{\frak B}
\nc\fC{\frak C}
\nc\fD{\frak D}
\nc\fE{\frak E}
\nc\fF{\frak F}
\nc\fG{\frak G}
\nc\fH{\frak H}
\nc\fJ{\frak J}
\nc\fK{\frak K}
\nc\fL{\frak L}
\nc\fM{\frak M}
\nc\fN{\frak N}
\nc\fO{\frak O}
\nc\fP{\frak P}
\nc\fQ{\frak Q}
\nc\fR{\frak R}
\nc\fS{\frak S}
\nc\fT{\frak T}
\nc\fU{\frak U}
\nc\fV{\frak V}
\nc\fZ{\frak Z}
\nc\fX{\frak X}
\nc\fY{\frak Y}
\nc\tfi{\ti{\Phi}}
\nc\bF{\bold F}
\rnc\bol{\bold 1}

\nc\ua{\bold U_\A}

\nc\qinti[1]{[#1]_i}
\nc\q[1]{[#1]_q}
\nc\xpm[2]{E_{#2 \delta \pm \alpha_#1}}  
\nc\xmp[2]{E_{#2 \delta \mp \alpha_#1}}
\nc\xp[2]{E_{#2 \delta + \alpha_{#1}}}
\nc\xm[2]{E_{#2 \delta - \alpha_{#1}}}
\nc\hik{\ed{k}{i}}
\nc\hjl{\ed{l}{j}}
\nc\qcoeff[3]{\left[ \begin{smallmatrix} {#1}& \\ {#2}& \end{smallmatrix}
\negthickspace \right]_{#3}}
\nc\qi{q}
\nc\qj{q}

\nc\ufdm{{_\ca\bu}_{\rm fd}^{\le 0}}


\nc\isom{\cong} 

\nc{\pone}{{\Bbb C}{\Bbb P}^1}
\nc{\pa}{\partial}
\def\H{\mathcal H}
\def\L{\mathcal L}
\nc{\F}{{\mathcal F}}
\nc{\Sym}{{\goth S}}
\nc{\A}{{\mathcal A}}
\nc{\arr}{\rightarrow}
\nc{\larr}{\longrightarrow}

\nc{\ri}{\rangle}
\nc{\lef}{\langle}
\nc{\W}{{\mathcal W}}
\nc{\uqatwoatone}{{U_{q,1}}(\su)}
\nc{\uqtwo}{U_q(\goth{sl}_2)}
\nc{\dij}{\delta_{ij}}
\nc{\divei}{E_{\alpha_i}^{(n)}}
\nc{\divfi}{F_{\alpha_i}^{(n)}}
\nc{\Lzero}{\Lambda_0}
\nc{\Lone}{\Lambda_1}
\nc{\ve}{\varepsilon}
\nc{\phioneminusi}{\Phi^{(1-i,i)}}
\nc{\phioneminusistar}{\Phi^{* (1-i,i)}}
\nc{\phii}{\Phi^{(i,1-i)}}
\nc{\Li}{\Lambda_i}
\nc{\Loneminusi}{\Lambda_{1-i}}
\nc{\vtimesz}{v_\ve \otimes z^m}

\nc{\asltwo}{\widehat{\goth{sl}_2}}
\nc\ag{\widehat{\goth{g}}}  
\nc\teb{\tilde E_\boc}
\nc\tebp{\tilde E_{\boc'}}

\title[A new current algebra and the reflection equation]{A new current algebra and the reflection equation}
\author{P. Baseilhac}
\address{Laboratoire de Math\'ematiques et Physique Th\'eorique CNRS/UMR 6083,
           F\'ed\'eration Denis Poisson, Universit\'e de Tours, Parc de Grammont, 37200 Tours, FRANCE}
\email{baseilha@lmpt.univ-tours.fr}
\author{K. Shigechi}
\address{Institute for Theoretical Physics, Valckenierstraat 65, 1018 XE Amsterdam, THE NETHERLANDS}
\email{k.shigechi@uva.nl}


\begin{abstract}
We establish an explicit algebra isomorphism between the quantum reflection algebra for the $U_q(\widehat{sl_2})$ $R$-matrix and a new type of current algebra. These two algebras are shown to be two realizations of 
a special case of tridiagonal algebras ($q-$Onsager).
\end{abstract}

\maketitle

\vskip -0.2cm

{\small MSC:\ 81R50;\ 81R10;\ 81U15.}

{{\small  {\it \bf Keywords}: Current algebra; Reflection equation; $q-$Onsager algebra; Quantum integrable models}}

\section{Introduction}
Discovered in the context of the quantum inverse scattering method for solving quantum integrable systems, quantum  groups appeared in the literature through different ways (see \cite{Chari} for references). On one hand, starting from the fundamental independent discovery of Drinfeld \cite{D1} and Jimbo \cite{J1} the quantum affine algebras $U_q({\widehat g})$ were initially formulated using a $q-$deformed version of the commutation relations between the elements of the Chevalley presentation of ${\widehat g}$. Later on \cite{D2}, Drinfeld proposed a new realization of $U_q({\widehat g})$ in terms of elements $\{x_{i,k}^{\pm},\varphi_{i,m},\psi_{i,n}|i=1,...,l;k\in{\mathbb Z},m\in-{\mathbb Z}_+,n\in{\mathbb Z}_+\}$ with $l=rank(g)$ generated through operator-valued functions $x^{\pm}_i(u),\varphi_i(u),\psi_i(u)$ of the formal variable $u$, the so-called currents. In some sense, the Drinfeld's realization is a quantum analogue of the loop realization of affine Lie algebras. Although Drinfeld stated the isomorphism between the two realizations, the proof only appeared later on \cite{Be,Jin}. In particular, in \cite{Be} (see also \cite{Dam}) Lusztig's theory of braid group action \cite{L} on the quantum enveloping algebras was used from which an explicit homomorphism from Drinfeld's new realization \cite{D2} to
the initial one \cite{D1,J1} was obtained.
On the other hand, an alternative realization of quantum affine algebras $U_q({\widehat g})$ by means of solutions of the quantum Yang-Baxter equation \cite{KRS,KS,F1} - called the $R-$matrix - and the ``RLL'' algebraic relations of the quantum inverse scattering method was proposed by Reshetikhin and Semenov-Tian-Shansky in \cite{RS}, extending the previous results of Faddeev-Reshetikhin-Takhtajan \cite{FRT1} for finite dimensional simple Lie algebra $g$.\vspace{1mm} 

In view of these realizations, in \cite{DF} Ding and Frenkel exhibited an explicit isomorphism between the ``RLL'' formulation and Drinfeld's second realization. Namely, $L-$operators were shown to admit a unique (Gauss) decomposition in terms of Drinfeld's currents $x^{\pm}_i(u),\varphi_i(u),\psi_i(u)$. So, all these different realizations may be summarized by the following picture which provides an unifying algebraic scheme for quantum affine algebras:
\begin{figure}[ht!]
\begin{center}
\begin{picture}(370,85)
   \put(20,60){\shortstack[1]{ \bf``RLL'' algebra \cite{FRT1}\\
        \small Yang-Baxter equation}}
   \put(215,75){\vector(-2,0){70}}
   \put(150,75){\vector(2,0){70}}
   \put(140,80){\shortstack[l]{ \\
                                 \footnotesize \qquad \qquad \cite{DF}}}
   \put(230,60){\shortstack[l]{\qquad {\bf Current algebra} \cite{D2} \\
                                    Drinfeld's presentation \small $\{x_{i,k}^{\pm},\varphi_{i,m},\psi_{i,n}\}$}}
   \put(150,40){\shortstack[l]{\quad \ \ {\bf $U_q({\widehat{g}})$}}}                             
   \put(140,20){\vector(-1,1){30}}
   \put(113,47){\vector(1,-1){30}}
   \put(70,30){\shortstack[l]{\footnotesize \cite{RS},\cite{DF}}}
   \put(270,50){\vector(-1,-1){30}}
   \put(225,30){\shortstack[l]{\footnotesize \qquad \qquad \cite{Be}}}                             
   \put(200,20){\vector(1,1){30}}
   \put(220,30){\shortstack[l]{\cite{Jin}}}
   \put(140,-5){\shortstack[l]{\quad {\bf Drinfeld-Jimbo}\\
   															\qquad \ \cite{D1}, \cite{J1}}}
\end{picture}
\end{center}
\end{figure} 
\vspace{1mm}

Beyond the interest of the algebraic structures involved, the explicit relation between the two different realizations (``RLL'' and Drinfeld's one) of $U_q({\widehat g})$ has found many interesting applications in the study of quantum integrable systems and representation theory.

For quantum integrable systems with boundaries, Cherednik \cite{Cher84} and later on \cite{Skly88} introduced another example of quadratic algebra associated with the so-called reflection equations. In this case, given an $R-$matrix associated with $U_q(\widehat{g})$ one is looking for a $K-$operator (sometimes called a Sklyanin's operator) satisfying ``RKRK'' algebraic relations. Motivated by the study of related integrable systems, several examples of $K-$operators acting on finite dimensional representations have been constructed. However, a formulation of $K-$operators in terms of current algebras i.e. a ``boundary'' - in reference to boundary integrable models - version of Ding-Frenkel \cite{DF} analysis has never been explicitly presented, nor a ``boundary''  analogue of Drinfeld's presentation even in the simplest case $U_q(\widehat{sl_2})$.\vspace{1mm}

In this paper, we argue that the $q-$Onsager algebra ${\mathbb T}$ (a type of tridiagonal algebra) which independently appeared in the context of orthogonal polynomial association schemes \cite{Ter03} and hidden symmetries of boundary integrable models \cite{B} admits analogously two alternative realizations. One realization is given in terms of a $K-$operator satisfying ``RKRK'' defining relations for the $U_q(\widehat{sl_2})$ $R$-matrix, and the other realization in terms of a new type of current algebra associated with the generating set $\{{\cW}_{-k},{\cW}_{k+1},{\cG}_{k+1},{\tilde{\cG}}_{k+1}|k\in{\mathbb Z}_+\}$ introduced in \cite{BK}. A new algebraic scheme follows, which extends to the family of reflection equation algebras the standard scheme relating the Faddeev-Reshetikhin-Takhtajan, Jimbo and Drinfeld (first and second) realizations of quantum affine algebras (see above picture). Although it is not considered here, the extension of our work to other classical Lie algebra - technically more complicated - is an interesting and open problem. \vspace{1mm}

The paper is organized as follows. In Section 2, a new current algebra - denoted $O_q(\widehat{sl_2})$ below - with generators $\cW_\pm(u),\cG_\pm(u)$ and formal variable $u$ is introduced. It is shown to be isomorphic to the ``RKRK'' algebra. A coaction map, the analogue of the coproduct for Hopf's algebras, is also explicitly derived. In Section 3, the new currents are found to be generating functions in the symmetric variable $U=(qu^2+q^{-1}u^{-2})/(q+q^{-1})$ which coefficients coincide with the elements of the infinite dimensional algebra - denoted ${\cal A}_q$ below -  introduced in \cite{BK}. In the last section, based on the commuting properties of the $K-$operator with the two generators of the $q-$Onsager algebra we establish the isomorphism between ${\mathbb T}$ and the ``RKRK'' algebra. A new algebraic scheme unifying these realizations is then proposed. 

\begin{notation}
In this paper, ${\mathbb R}$, ${\mathbb C}$, ${\mathbb Z}$ denote the field of real, complex numbers and integers, respectively. We denote ${\mathbb R}^*={\mathbb R}\backslash\{0\}$, ${\mathbb C}^*={\mathbb C}\backslash\{0\}$, ${\mathbb Z}^*={\mathbb Z}\backslash\{0\}$ and ${\mathbb Z}_+$ for nonnegative integers. We introduce the $q-$commutator $\big[X,Y\big]_q=qXY-q^{-1}YX$ where $q$ is the deformation parameter, assumed not to be a root of unity. 
\end{notation}

\section{A new current algebra}
Let ${\cal V}$ be a finite dimensional space. Let the operator-valued function $R:{\mathbb C}^*\mapsto \mathrm{End}({\cal V}\otimes {\cal V})$ be the intertwining operator (quantum $R-$matrix) between the tensor product of two fundamental representations ${\cal V}={\mathbb C}^2$ associated with the algebra $U_q(\widehat{sl_2})$. The element $R(u)$ depends on the deformation parameter $q$ and is defined by \cite{Baxter}
\begin{align}
R(u) =\left(
\begin{array}{cccc} 
 uq -  u^{-1}q^{-1}    & 0 & 0 & 0 \\
0  &  u -  u^{-1} & q-q^{-1} & 0 \\
0  &  q-q^{-1} & u -  u^{-1} &  0 \\
0 & 0 & 0 & uq -  u^{-1}q^{-1}
\end{array} \right) \ ,\label{R}
\end{align}
where $u$ is called the spectral parameter. Then $R(u)$ satisfies the quantum Yang-Baxter equation in the space ${\cal V}_1\otimes {\cal V}_2\otimes {\cal V}_3$. Using the standard notation $R_{ij}(u)\in \mathrm{End}({\cal V}_i\otimes {\cal V}_j)$, it reads 
\begin{align}
R_{12}(u/v)R_{13}(u)R_{23}(v)=R_{23}(v)R_{13}(u)R_{12}(u/v)\ \qquad \forall u,v.\label{YB}
\end{align}

Let us now consider an extension related with the reflection equation or boundary quantum Yang-Baxter equation - which was first introduced in the context of boundary quantum inverse scattering theory (see \cite{Cher84},\cite{Skly88} for details) -. For simplicity and without loosing generality we consider the simplest case, i.e. the $U_q(\widehat{sl_2})$ $R-$matrix defined above.

\begin{defn}[``RKRK'' Reflection equation algebra]{\label{defnRE}} Define $R(u)$ to be (\ref{R}). $B_q(\widehat{sl_2})$ is an associative algebra with unit $1$ and
generators $K_{11}(u)\equiv A(u)$, $K_{12}(u)\equiv B(u)$, $K_{21}(u)\equiv C(u)$, $K_{22}(u)\equiv D(u)$ considered as the elements of the $2\times 2$ square matrix $K(u)$ which obeys the defining relations $\forall u,v$
\begin{align} R_{12}(u/v)\ (K(u)\otimes I\!\!I)\ R_{12}(uv)\ (I\!\!I \otimes K(v))\
= \ (I\!\!I \otimes K(v))\ R_{12}(uv)\ (K(u)\otimes I\!\!I)\ R_{12}(u/v)\ .
\label{RE} \end{align}
\end{defn}

It is known that given a solution $K(u)$ of the reflection equation (\ref{RE}), one can construct by induction other solutions using suitable combinations of Lax operators $L(u)$. This is sometimes named as the ``dressing'' procedure. In particular, for the simplest case one has:
\begin{prop}[see \cite{Skly88}]\label{propL} Given $R(u)$ defined by (\ref{R}),
let $L(u)$ be a solution of the quantum Yang-Baxter algebra with defining relations $\forall u,v$
\beqa R(u/v)(L(u)\otimes I\!\!I)(I\!\!I \otimes L(v))\
= \ (I\!\!I \otimes L(v)) (L(u)\otimes I\!\!I) R(u/v)\ .
\label{YB} \eeqa
Let $K(u)$ be a solution of (\ref{RE}). Then, the matrix $L(u)K(u)L^{-1}(u^{-1})$ is a solution of the reflection equation (\ref{RE}).
\end{prop}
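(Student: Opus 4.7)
The plan is to substitute $\tilde K(u)=L(u)K(u)L^{-1}(u^{-1})$ into the left-hand side of \er{RE}, transport all $L^{\pm1}$-factors past the central $K$-block by repeated use of commutation relations derived from \er{YB}, apply the reflection equation for $K$ in the middle, and then reassemble $\tilde K(v)$ and $\tilde K(u)$ on the opposite side. Throughout, I assume (as is standard for a dressing construction) that the entries of $L(\cdot)$ and those of $K(\cdot)$ lie in independent commuting subalgebras, so that $L_j^{\pm1}(\cdot)$ and $K_i(\cdot)$ commute whenever $i\ne j$.

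First I would derive from \er{YB} three companion identities,
\begin{align*}
L_1^{-1}(u^{-1})\,R_{12}(uv)\,L_2(v)&=L_2(v)\,R_{12}(uv)\,L_1^{-1}(u^{-1}),\\
L_1(u)\,R_{12}(uv)\,L_2^{-1}(v^{-1})&=L_2^{-1}(v^{-1})\,R_{12}(uv)\,L_1(u),\\
R_{12}(u/v)\,L_1^{-1}(u^{-1})\,L_2^{-1}(v^{-1})&=L_2^{-1}(v^{-1})\,L_1^{-1}(u^{-1})\,R_{12}(u/v).
\end{align*}
The first two are obtained from \er{YB} by substituting $u\to u^{-1}$ (resp.\ $v\to v^{-1}$), using the unitarity property $R_{12}(x)R_{12}(x^{-1})\in{\mathbb C}^*\cdot I\!\!I$ of the $R$-matrix \er{R} to convert $R_{12}(x^{-1})$ into $R_{12}(x)^{-1}$ up to a nonzero scalar, and then inverting in the auxiliary space carrying $L_1$ (resp.\ $L_2$) so that $L_i(\cdot)$ becomes $L_i^{-1}(\cdot)$. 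The third is obtained from \er{YB} with $(u,v)\to(u^{-1},v^{-1})$, again via unitarity, followed by a single matrix inversion.

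The main calculation then runs as follows. Substituting the dressing formula, applying the first companion identity to move $L_1^{-1}(u^{-1})$ past $R_{12}(uv)L_2(v)$, and using the $K$--$L$ commutations recast the left-hand side of \er{RE} with $\tilde K$ in place of $K$ as
\[
R_{12}(u/v)\,L_1(u)L_2(v)\,\bigl[\,K_1(u)\,R_{12}(uv)\,K_2(v)\,\bigr]\,L_1^{-1}(u^{-1})L_2^{-1}(v^{-1}).
\]
Using \er{YB} on the outer left triple, the reflection equation \er{RE} for $K$ on the bracket, the third companion identity to pull $R_{12}(u/v)$ through $L_1^{-1}(u^{-1})L_2^{-1}(v^{-1})$, and a final application of the second companion identity together with the remaining $K$--$L$ commutations, the expression collapses to $\tilde K_2(v)\,R_{12}(uv)\,\tilde K_1(u)\,R_{12}(u/v)$, which is precisely the right-hand side of \er{RE} for $\tilde K$.

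The main obstacle is the careful tracking of spectral parameters during the derivation of the three companion identities: each substitution $u\leftrightarrow u^{-1}$ or $v\leftrightarrow v^{-1}$ in \er{YB} produces an $R$-matrix whose argument must be converted back to $uv$ or $u/v$ via unitarity before the matrix inversion is performed, and ensuring that the two operations interact cleanly (and that scalar factors cancel) is where the real care is required. Once these four RLL-type relations are established, the remainder of the argument is a routine rearrangement together with a single invocation of \er{RE}.
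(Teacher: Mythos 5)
Your proposal is correct, and it is essentially the standard argument: the paper itself gives no proof of this proposition (it simply cites \cite{Skly88}), and your three companion identities together with the chain of moves --- first identity, the $K$--$L$ entry commutations, the RLL relation, the reflection equation for $K$, third identity, then second identity --- is exactly Sklyanin's dressing proof, with your commuting-subalgebras assumption matching how the paper actually uses the result (e.g.\ in Proposition \ref{propdelta}, where the dressed entries live in $U_q(sl_2)\times O_q(\widehat{sl_2})$). One harmless imprecision: unitarity of $R$ is genuinely needed only for your first and third companion identities, since for the second one the substitution $v\to v^{-1}$ already turns the argument $u/v$ into $uv$, so no conversion of $R_{12}(x^{-1})$ into $R_{12}(x)^{-1}$ is required there.
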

For instance, using the generating set $\{S_\pm,s_3\}$ of the quantum algebra $U_{q}(sl_2)$ with defining relations
$[s_3,S_\pm]=\pm S_\pm$ and 
$[S_+,S_-]=(q^{2s_3}-q^{-2s_3})/(q-q^{-1})$\ , it is known that the Lax operator
\beqa
{L}(u) =\left(
\begin{array}{cc}
 uq^{{1\over 2}}q^{s_3}- u^{-1}q^{-{1\over 2}}q^{-s_3}    &(q-q^{-1})S_-\\
(q-q^{-1})S_+    &  uq^{{1\over 2}}q^{-s_3}- u^{-1}q^{-{1\over 2}}q^{s_3}  \\
\end{array} \right) \ \label{Lax}
\eeqa
satisfies the quantum Yang-Baxter algebra (\ref{YB}). 
In quantum integrable lattice models with boundaries, the ``dressing'' procedure is often used. Starting from an elementary solution with $c-$number entries (associated with one boundary of the system) and dressing the $K-$operator with a product of $N$ $L-$operators acting on different quantum spaces, one reconstructs a whole spin chain with $N$ sites including inhomogeneities, if necessary \cite{Skly88}.\vspace{2mm} 

In order to exhibit the new current algebra starting from the ``RKRK'' reflection equation algebra, based on previous works on boundary quantum integrable systems on the lattice \cite{B,BK} it seems rather natural to write the elements $A(u)$, $B(u)$, $C(u)$, $D(u)$ in terms of new currents as follows. It may be important to stress that Proposition \ref{propL} plays an essential role (see \cite{B,BK}) in suggesting such a decomposition.
 
\begin{lem}{\label{lem}} Suppose $q\neq 1$, $u\neq q^{-1}$ and $k_\pm\in {\mathbb C}^*$. 
Any solution of the reflection equation algebra $B_q(\widehat{sl_2})$ admits the following decomposition in terms of new elements $\cW_\pm(u)$, $\cG_\pm(u)$:
\begin{align}
A(u)= uq \cW_+(u) - u^{-1}q^{-1} \cW_-(u)\ ,\label{m1}\\
D(u)= uq \cW_-(u) - u^{-1}q^{-1} \cW_+(u)\ ,\label{m2}\\
B(u)= \frac{1}{k_-(q+q^{-1})}\cG_+(u) + \frac{k_+(q+q^{-1})}{(q-q^{-1})}\ ,\label{m3}\\
C(u)= \frac{1}{k_+(q+q^{-1})}\cG_-(u) + \frac{k_-(q+q^{-1})}{(q-q^{-1})}\ .\label{m4}
\end{align}
Given the elements $A(u),B(u),C(u)$ of this form, this decomposition is unique.
\end{lem}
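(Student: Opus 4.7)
The plan is to read (\ref{m1})–(\ref{m4}) as a linear change of variables from $\{\cW_+(u),\cW_-(u),\cG_+(u),\cG_-(u)\}$ to the matrix entries $\{A(u),B(u),C(u),D(u)\}$ and to invert it explicitly; this simultaneously settles existence and uniqueness. Crucially, the $4\times 4$ change-of-basis matrix is block diagonal: (\ref{m1})–(\ref{m2}) couple the diagonal entries only to $\cW_\pm$, while (\ref{m3})–(\ref{m4}) couple the off-diagonal entries only to $\cG_\pm$, so I can treat the two $2\times 2$ blocks independently.

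For the $(A,D)\leftrightarrow(\cW_+,\cW_-)$ block I would take the sum and difference of (\ref{m1}) and (\ref{m2}), which diagonalise the system:
\[A(u)+D(u) = (uq - u^{-1}q^{-1})\bigl(\cW_+(u)+\cW_-(u)\bigr),\]
\[A(u)-D(u) = (uq + u^{-1}q^{-1})\bigl(\cW_+(u)-\cW_-(u)\bigr).\]
Dividing and recombining produces the unique expressions for $\cW_\pm(u)$ as $B_q(\widehat{sl_2})$-valued rational functions of $A(u)$ and $D(u)$, provided $uq\pm u^{-1}q^{-1}$ are both nonzero.

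For the $(B,C)\leftrightarrow(\cG_+,\cG_-)$ block the relations (\ref{m3})–(\ref{m4}) are even simpler: they are affine-linear in $\cG_\pm(u)$ with leading scalar coefficients $1/(k_\mp(q+q^{-1}))$. Under the standing hypotheses $k_\pm\in\mathbb{C}^*$ and $q$ not a root of unity, these coefficients are nonzero units, so the inversion is immediate and produces $\cG_\pm(u)$ uniquely in terms of $B(u)$ and $C(u)$.

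No use is made of the reflection equation (\ref{RE}); the lemma is a pure change of variables on the matrix entries of $K(u)$. The only substantive bookkeeping step — and in effect the only \emph{obstacle} worth mentioning — is verifying that the denominators $q\pm q^{-1}$ and $uq\pm u^{-1}q^{-1}$ are all nonzero. The first pair is handled by $q$ not being a root of unity; the second pair vanishes only when $u^2 q^2 = \pm 1$, which is excluded by treating $u$ as generic (the stated nondegeneracy $u\neq q^{-1}$ being the principal case of concern). Once these checks are dispatched, the explicit formulas for $\cW_\pm(u)$ and $\cG_\pm(u)$ in terms of $A(u),B(u),C(u),D(u)$ establish both the existence and the uniqueness of the claimed decomposition.
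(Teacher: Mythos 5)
Your proposal proves a different and strictly weaker statement than the lemma, and the divergence is fatal. You invert the pair (\ref{m1})--(\ref{m2}) to define $\cW_\pm(u)$ from $A(u)$ \emph{and} $D(u)$ jointly, and you assert that ``no use is made of the reflection equation.'' But the uniqueness clause of the lemma fixes only $A(u),B(u),C(u)$: the actual content is that, for a solution of $B_q(\widehat{sl_2})$, once $A,B,C$ are written in the form (\ref{m1}), (\ref{m3}), (\ref{m4}), the fourth entry $D(u)$ is \emph{forced} to be (\ref{m2}) with the same currents $\cW_\pm(u)$ that already appear in $A(u)$ --- in other words, inside the reflection-equation algebra $D$ is redundant given $A,B,C$. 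Your change of variables cannot see this, because it needs $D$ as input in order to define $\cW_\pm$; indeed your argument would ``decompose'' any $2\times 2$ matrix of formal series whatsoever, which is exactly why it cannot capture a lemma whose hypothesis is that $K(u)$ solves (\ref{RE}) and whose uniqueness claim mentions only three of the four entries.

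The paper's proof supplies precisely the step you omit. Writing out (\ref{RE}) with (\ref{R}) as the explicit system of relations $(i)$--$(xiii)$, it supposes a second solution $\{A,B,C,\overline{D}\}$ with $\overline{D}(u)=D(u)+f(u)$ and uses those relations to derive $B(u)f(v)=f(u)B(v)=C(u)f(v)=f(u)C(v)=0$ for all $u,v$, then $\big[A(u),f(v)\big]=0$ from $(i)$--$(iii')$, and finally $\big[D(u),f(w)\big]=0$, from which $f\equiv 0$. That is where the hypothesis that $K(u)$ satisfies the reflection equation enters, and it is the entire mathematical content of the uniqueness claim. Your existence step (defining $\cG_\pm$ from $B,C$ by the affine inversion, and $\cW_\pm$ from the diagonal block) is fine as far as it goes, but it only handles the trivial half; to repair the proof you must show that, with $A,B,C$ fixed in decomposed form, the defining relations of $B_q(\widehat{sl_2})$ admit exactly one $D$, namely (\ref{m2}).
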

\begin{proof}
The reflection equation being satisfied for arbitrary $u,v\in{\mathbb C}^*$ and generic $q$, in view of (\ref{R}) the elements $A(u)$, $B(u)$, $C(u)$, $D(u)$ are {\it a priori} formal power series in $u$. With no restrictions, let us choose $A(u)$, $B(u)$, $C(u)$ to be (\ref{m1}), (\ref{m3}), (\ref{m4}), respectively. We have to show that $D(u)$ is uniquely defined by (\ref{m2}). To prove it, assume the set $\{A,B,C,D\}$ given by (\ref{m1})-(\ref{m4}) satisfies the reflection equation algebra with (\ref{R}). In terms of these elements, explicitly (\ref{RE}) reads
\begin{eqnarray*}
 (i)&&a_-c_+
 \left(B C' -B' C \right)+a_-a_+
 [A ,A']=0\ ,\\
 (i')&&a_-c_+
 \left(C B' -C' B \right)+a_-a_+
 [D ,D']=0\ ,\\
(ii) &&b_-b_+
 [A, D']+
 c_-c_+
 [D ,D']+\ c_-a_+
 \big(C B' -C' B \big)=0\ ,\\
 (ii') &&b_-b_+
 [D, A']+
 c_-c_+
 [A ,A']+\ c_-a_+
 \big(B C' -B' C \big)=0\ ,\\
(iii)&&c_-b_+
 \big(D A'-D'A \big)+ b_-c_+
 \big(A A'-D'D\big)+\ b_-a_+
 [B,C']=0\ ,\\
(iii')&&c_-b_+
 \big(AD'-A'D \big)+ b_-c_+
 \big(DD'-A'A\big)+\ b_-a_+
 [C,B']=0\ ,\\
(iv)&&b_-b_+AC'  +
 c_-c_+DC'+\ c_-a_+CA'
 -\ a_-a_+C' A -\ a_-c_+D'C=0\ ,\\
 (v)&&b_-b_+B'A+\ c_-c_+ B'D+\ c_-a_+A'B-\ a_-a_+A B'-\ a_-c_+BD'= 0\ ,\\
(vi)&&b_-b_+C'D+\ c_-c_+C'A+\ c_-a_+D'C-\ a_-a_+DC'-\ a_-c_+CA'=0\ ,\\
(vii)&&b_-b_+DB'+\ c_-c_+AB'+\ c_-a_+BD'-\ a_-a_+B'D-\ a_-c_+A'B= 0\ ,\\
(viii)&&b_-a_+BD' +
 \ c_-b_+DB'+\ b_-c_+AB'
 -\ a_-b_+D'B=0\ ,\\
(ix)&&b_-a_+A'B+\ c_-b_+B'A+\ b_-c_+B'D-\ a_-b_+BA'=0\ ,\\
(x)&&b_-a_+D'C+\ c_-b_+C'D+\ b_-c_+C'A-\ a_-b_+CD'=0\ ,\\
(xi)&&b_-a_+CA'+\ c_-b_+AC'+\ b_-c_+DC'-\ a_-b_+A'C=0\ ,\\
(xii)&& a_-b_+[B,B']=0\ ,\\
(xiii)&& a_-b_+[C,C']=0\ ,
\end{eqnarray*}
where $a(u)=uq-u^{-1}q^{-1}$, $b(u)=u-u^{-1}$, $c_\pm=q-q^{-1}$ and
we used the shorthand notations $a_-=a(u/v)$,
$a_+=a(uv)$ and similarly for $b$. Also 
$A=A(u)$, $A'=A(v)$ and similarly for $B,C$ and
$D$. Now, consider another set, say $\{A,B,C,{\overline D}\}$, ${\overline D(u)}=D(u)+f(u)$ where $f(u)$ is an unknown  function of $u$ and the elements of the reflection equation algebra. If $\{A,B,C,{\overline D}\}$ is also a solution of the reflection equation algebra, then $f(u)\equiv f(A,B,C,D;u)$ - the equations $(i)-(xiii)$ being the complete set of defining relations. Replacing  ${\overline D(u)}$ in $(iv)-(xi)$, 
we obtain $B(u)f(v) = f(u)B(v) = C(u)f(v) = f(u)C(v) = 0$ $\forall u,v$. On the other hand, from $(i)-(iii')$ one gets
$\big[A(u),f(v)\big]=0$. Acting with the l.h.s of $(ix)$ on $f(w)$ and using previous equations it follows $\big[D(u),f(w)\big]=0$ $\forall u,w$. All these equations imply that $f(u)\equiv 0$ $\forall u$.
\end{proof}

The next step is to prove the equivalence between the (sixteen in total) independent equations coming from the reflection equation algebra (\ref{RE}) with (\ref{R}) and a closed system of commutation relations among the currents. The relations below are among the main results of the paper. 

\begin{defn}[Current algebra]{\label{defnCA}} $O_q(\widehat{sl_2})$ is an associative algebra with unit $1$, current generators $\cW_\pm(u)$, $\cG_\pm(u)$ and parameter $\rho\in{\mathbb C}^*$. Define the formal variables $U=(qu^2+q^{-1}u^{-2})/(q+q^{-1})$ and $V=(qv^2+q^{-1}v^{-2})/(q+q^{-1})$ \ $\forall u,v$. The defining relations are:
\begin{align}
&&\big[{\cW}_\pm(u),{\cW}_\pm(v)\big]=0\ ,\qquad\qquad\qquad\qquad\qquad\qquad\qquad\label{ec1}\\
&&\big[{\cW}_+(u),{\cW}_-(v)\big]+\big[{\cW}_-(u),{\cW}_+(v)\big]=0\ ,\qquad\qquad\qquad\qquad\qquad\qquad\qquad\label{ec3}\\
&&(U-V)\big[{\cW}_\pm(u),{\cW}_\mp(v)\big]= \frac{(q-q^{-1})}{\rho(q+q^{-1})}\left({\cG}_\pm(u){\cG}_\mp(v)-{\cG}_\pm(v){\cG}_\mp(u)\right)\qquad\qquad\qquad\label{ec4}\\
&& \qquad \qquad\qquad\qquad\qquad\qquad\qquad\qquad+ \frac{1}{(q+q^{-1})} \big({\cG}_\pm(u)-{\cG}_\mp(u)+{\cG}_\mp(v)-{\cG}_\pm(v)\big)\ ,\nonumber
\end{align}
\beqa
&&{\cW}_\pm(u){\cW}_\pm(v)-{\cW}_\mp(u){\cW}_\mp(v)+\frac{1}{\rho(q^2-q^{-2})}\big[{\cG}_\pm(u),{\cG}_\mp(v)\big]\qquad\qquad\qquad\qquad \qquad\label{ec5}\\
&&\qquad\qquad\qquad\qquad\qquad\qquad+ \ \frac{1-UV}{U-V}\big({\cW}_\pm(u){\cW}_\mp(v)-{\cW}_\pm(v){\cW}_\mp(u)\big)=0\ ,\nonumber\\
&&U\big[{\cG}_\mp(v),{\cW}_\pm(u)\big]_q -V\big[{\cG}_\mp(u),{\cW}_\pm(v)\big]_q - (q-q^{-1})\big({\cW}_\mp(u){\cG}_\mp(v)-{\cW}_\mp(v){\cG}_\mp(u)\big)\label{ec6}\\
&&\qquad\qquad\qquad\qquad\qquad\qquad\qquad \quad + \ \rho \big(U{\cW}_\pm(u)-V{\cW}_\pm(v)-{\cW}_\mp(u)+{\cW}_\mp(v)\big)=0\ ,\nonumber\\
&&U\big[{\cW}_\mp(u),{\cG}_\mp(v)\big]_q -V\big[{\cW}_\mp(v),{\cG}_\mp(u)\big]_q - (q-q^{-1})\big({\cW}_\pm(u){\cG}_\mp(v)-{\cW}_\pm(v){\cG}_\mp(u)\big)\label{ec7}\\
&& \qquad\qquad\qquad\qquad\qquad\qquad\qquad\quad+  \ \rho \big(U{\cW}_\mp(u)-V{\cW}_\mp(v)-{\cW}_\pm(u)+{\cW}_\pm(v)\big)=0\ ,\nonumber\\
&&\big[{\cG}_\epsilon(u),{\cW}_\pm(v)\big]+\big[{\cW}_\pm(u),{\cG}_\epsilon(v)\big]=0 \ ,\quad \forall \epsilon=\pm\label{ec8}\ ,\qquad\qquad\qquad\qquad\qquad\qquad\qquad\\
&&\big[{\cG}_\pm(u),{\cG}_\pm(v)\big]=0\ ,\label{ec9}\qquad\qquad\qquad\qquad\qquad\qquad\qquad\\ 
&&\big[{\cG}_+(u),{\cG}_-(v)\big]+\big[{\cG}_-(u),{\cG}_+(v)\big]=0\ .\qquad\qquad\qquad\qquad\qquad\qquad\qquad\ \label{ec16}
\eeqa
\end{defn}

\begin{rem} There exists an automorphism $\Omega$ defined by:
\beqa
\Omega(\cW_\pm(u))=\cW_\mp(u) \ , \qquad  \Omega(\cG_\pm(u))=\cG_\mp(u)\ . 
\eeqa
\end{rem}

Contrary to all known examples of Drinfeld currents associated with quantum affine Lie algebras or superalgebras, it is important to notice that the variables $u,v$ only arise through the symmetric $(qx^2\leftrightarrow q^{-1}x^{-2},\  \forall x\in u,v$) combinations $U,V$, respectively. 
In view of the connections with algebraic structures that appear in boundary quantum integrable models \cite{B,B2}, such a fact is not surprising although not obvious from (\ref{RE}). We now turn to the derivation of all equations above.

\begin{thm}\label{th1}
The map $\Phi: B_q(\widehat{sl_2}) \mapsto O_q(\widehat{sl_2})$ defined by (\ref{m1}-\ref{m4}) is an algebra isomorphism.
\end{thm}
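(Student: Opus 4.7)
The plan is to exploit Lemma~\ref{lem}, which already provides a bijection between the generating sets of $B_q(\widehat{sl_2})$ and $O_q(\widehat{sl_2})$ via the unique decomposition (\ref{m1})--(\ref{m4}). It therefore only remains to verify that the sixteen independent relations encoded in (\ref{RE}), namely the equations $(i)$--$(xiii)$ displayed in the proof of Lemma~\ref{lem}, are equivalent under this substitution to the sixteen current-algebra relations (\ref{ec1})--(\ref{ec16}).

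My approach would be to substitute (\ref{m1})--(\ref{m4}) into each of $(i)$--$(xiii)$, expand in powers of $u,v$, and identify on both sides the coefficients of the independent monomials, which naturally combine into the symmetric spectral variables $U=(qu^2+q^{-1}u^{-2})/(q+q^{-1})$ and $V=(qv^2+q^{-1}v^{-2})/(q+q^{-1})$. I would organise the sixteen verifications by generator content. First, the equations $(xii)$ and $(xiii)$, of the form $a_-b_+[B,B']=0$ and $a_-b_+[C,C']=0$, collapse immediately to (\ref{ec9}) since $a_-b_+$ is generically nonzero and the shifts $k_\pm(q+q^{-1})/(q-q^{-1})$ in $B,C$ drop out of commutators. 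Second, the equations $(viii)$--$(xi)$, which mix one off-diagonal and two diagonal entries, reproduce the $q$-commutator relations (\ref{ec6})--(\ref{ec7}) together with (\ref{ec8}): here the coefficients $a_\pm,b_\pm,c_\pm$ reorganise precisely into the factors $U,V$ and $(q-q^{-1})$ multiplying $\cW_\mp\cG_\mp$, while the inhomogeneous $\rho\,\cW_\pm$ terms arise from the constant shifts in $B,C$. Third, the equations $(i)$--$(iii')$, which only involve $A$ and $D$, after insertion of $A=uq\cW_+-u^{-1}q^{-1}\cW_-$ and $D=uq\cW_- - u^{-1}q^{-1}\cW_+$, split by monomial type into the purely $\cW$ relations (\ref{ec1}), (\ref{ec3}), and the $\cW\cW$--$\cG\cG$ coupling relations (\ref{ec4})--(\ref{ec5}). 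Finally, equations $(iv)$--$(vii)$ are shown to be equivalent, after use of the above and of the residual symmetry $\Omega$, to (\ref{ec16}) together with consequences already obtained.

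The parameter $\rho$ in (\ref{ec4})--(\ref{ec7}) is fixed in terms of $k_\pm$ by matching the residues that appear when one separates in (\ref{m3})--(\ref{m4}) the ``current'' part from the constant part: an explicit value of the form $\rho\propto k_+k_-$ will emerge, and the fact that this single constant governs all of (\ref{ec4})--(\ref{ec7}) is itself a nontrivial consistency check confirming that the decomposition is compatible with (\ref{RE}).

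The main obstacle is bookkeeping rather than conceptual: the relations $(i)$--$(iii')$ and $(iv)$--$(vii)$ expand, under (\ref{m1})--(\ref{m4}), into sums of many terms with coefficients in $u^{\pm 1}q^{\pm 1}$, and one must rearrange them so that the only spectral dependence is through $U,V$ (and the combinations $U-V$, $1-UV$ appearing in (\ref{ec4})--(\ref{ec5})). The identity $a(u/v)a(uv)-b(u/v)b(uv)=(q-q^{-1})^2(U-V)/(q+q^{-1})^{-1}$-type rearrangements between entries of $R$ and symmetric combinations of $U,V$ are the key algebraic facts that make this regrouping possible; verifying these and their analogues carefully is the technical core. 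Since both the existence and uniqueness of the decomposition were proved in Lemma~\ref{lem}, the verification that (\ref{ec1})--(\ref{ec16}) are not only \emph{implied} by $(i)$--$(xiii)$ but are \emph{equivalent} to them suffices to conclude that $\Phi$ is an isomorphism of algebras.
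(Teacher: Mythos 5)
Your overall strategy---substituting (\ref{m1})--(\ref{m4}) into the sixteen components $(i)$--$(xiii)$ of (\ref{RE}) and regrouping the spectral dependence into the symmetric variables $U,V$---is exactly the paper's, and your handling of $(xii),(xiii)$ and of the final injectivity/surjectivity step matches it. However, your bookkeeping of which components yield which current relations contains genuine errors that would make the derivation fail as organized. First, the equations $(i)$--$(iii')$ do \emph{not} ``only involve $A$ and $D$'': they contain the terms $a_-c_+(BC'-B'C)$, $c_-a_+(CB'-C'B)$ and $b_-a_+[B,C']$, and it is precisely from these that the quadratic $\cG_\pm\cG_\mp$ terms in (\ref{ec4}), (\ref{ec5}) and, crucially, the relation (\ref{ec16}) arise: in the paper, (\ref{ec16}) is the statement $X_4\equiv 0$, extracted from $(i)$--$(ii')$. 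Second, and more seriously, you assign (\ref{ec16}) to the group $(iv)$--$(vii)$; this is impossible, since $(iv)$--$(vii)$ are \emph{linear} in the off-diagonal entries $B,C$, hence linear in the $\cG$'s, whereas (\ref{ec16}) is quadratic in them. Conversely, $(iv)$--$(vii)$ are not redundant ``consequences already obtained'': the paper needs the full set of four $C$-equations $(iv),(vi),(x),(xi)$ to set up a linear system in the four combinations $Y_1,\dots,Y_4$ whose only solution for generic $u,v,q$ is $Y_i\equiv 0$, which is what produces (\ref{ec6})--(\ref{ec8}) for $\cG_-$; the four $B$-equations $(v),(vii),(viii),(ix)$ do the same for $\cG_+$. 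Your group $(viii)$--$(xi)$ contains only two equations involving $B$ and two involving $C$, which is not enough to force the vanishing of the four $Y$-combinations attached to each off-diagonal generator, so under your organization (\ref{ec6})--(\ref{ec8}) remain underived and (\ref{ec16}) is never obtained at all.

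A smaller point: your key identity is garbled. One computes $a(u/v)a(uv)-b(u/v)b(uv)=(q-q^{-1})(qu^{2}-q^{-1}u^{-2})$, which is not symmetric in $u,v$; the identity that actually converts $R$-matrix entries into the symmetric variables is $b(u/v)\,a(uv)=(q+q^{-1})(U-V)$, and factors of this kind are what multiply $[\cW_\pm(u),\cW_\mp(v)]$ in (\ref{ec4}). None of this changes the viability of the method: with the groupings corrected to the paper's ones, namely $(i)$--$(ii')\Rightarrow$ (\ref{ec1}), (\ref{ec3}), (\ref{ec4}), (\ref{ec16}); then $(iii),(iii')\Rightarrow$ (\ref{ec5}); then $(iv),(vi),(x),(xi)\Rightarrow$ (\ref{ec6})--(\ref{ec8}) for $\cG_-$ and $(v),(vii),(viii),(ix)\Rightarrow$ the same for $\cG_+$; and finally $(xii),(xiii)\Rightarrow$ (\ref{ec9}), your plan becomes the paper's proof.
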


\begin{proof} 
First, according to Lemma \ref{lem} we have to show that the map $\Phi$ defined by (\ref{m1}-\ref{m4}) is an algebra homomorphism from $B_q(\widehat{sl_2})$ to $O_q(\widehat{sl_2})$. Set $\rho\equiv k_+k_-(q+q^{-1})^2$ 
and define
\beqa
&&\qquad \qquad X_1\equiv \big[{\cW}_+(u),{\cW}_+(v)\big]\ ,\qquad X_2\equiv \big[{\cW}_-(u),{\cW}_-(v)\big]\ ,\nonumber \\
&&\qquad \qquad X_3\equiv \big[{\cW}_+(u),{\cW}_-(v)\big] + \big[{\cW}_-(u),{\cW}_+(v)\big]\ ,\nonumber\\
&&\qquad \qquad X_4\equiv \big[{\cG}_+(u),{\cG}_-(v)\big] + \big[{\cG}_-(u),{\cG}_+(v)\big]\ ,\nonumber\\
&&\qquad \qquad X_5\equiv(q+q^{-1})(U-V)\big[{\cW}_+(u),{\cW}_-(v)\big]- \frac{(q-q^{-1})}{\rho}\left({\cG}_+(u){\cG}_-(v)-{\cG}_+(v){\cG}_-(u)\right)\qquad\qquad\qquad\nonumber\\
&&\qquad \qquad\qquad\qquad\qquad\qquad\qquad\qquad -\big({\cG}_+(u)-{\cG}_-(u)+{\cG}_-(v)-{\cG}_+(v)\big)\ , \nonumber
\eeqa
where the variables $U\equiv (qu^2+q^{-1}u^{-2})/(q+q^{-1})$ and similarly for $V$ are introduced.
In terms of the combinations $X_i$, it is straightforward to show that the equations $(i),(i'),(ii),(ii')$ above can be simply written, respectively, as
\beqa
(i) && \Leftrightarrow \quad a(uv)uvq^2X_1 + a(uv)u^{-1}v^{-1}q^{-2}X_2 - a(uv)u^{-1}vX_3 - X_5=0 \ ,\nonumber\\
(i') &&\Leftrightarrow \quad a(uv)uvq^2X_2 + a(uv)u^{-1}v^{-1}q^{-2}X_1 - a(uv)uv^{-1}X_3 + \frac{q-q^{-1}}{\rho}X_4 + X_5=0\ ,\nonumber
\eeqa
\beqa
(ii) && \Leftrightarrow \quad 
\big(b(u/v)b(uv)uv^{-1} -(q-q^{-1})^2u^{-1}v^{-1}q^{-2}\big)X_1\nonumber\\
&&\qquad +\big(b(u/v)b(uv)u^{-1}v -(q-q^{-1})^2uvq^2\big)X_2 \nonumber\\
&&\qquad- \big(b(u/v)b(uv)u^{-1}v^{-1}q^{-2} -(q-q^{-1})^2uv^{-1}\big)X_3\nonumber\\
&&\qquad - a(uv)\frac{(q-q^{-1})}{\rho}X_4 - a(uv)X_5=0 \ ,\nonumber
\eeqa
\beqa
(ii') && \Leftrightarrow \quad 
\big(b(u/v)b(uv)u^{-1}v -(q-q^{-1})^2uvq^2\big)X_1\nonumber\\
&&\qquad +\big(b(u/v)b(uv)uv^{-1} -(q-q^{-1})^2u^{-1}v^{-1}q^{-2}\big)X_2 \nonumber\\
&&\qquad- \big(b(u/v)b(uv)uvq^2 -(q-q^{-1})^2u^{-1}v\big)X_3\nonumber\\
&&\qquad - a(uv)X_5=0 \ .\nonumber
\eeqa
Simplifying these expressions, in particular it follows
\beqa
a(uv)(i)-(ii') && \Leftrightarrow \quad v^2q^2X_1+v^{-2}q^{-2}X_2-X_3=0\ ,\nonumber\\
a(uv)(i')-(ii) && \Leftrightarrow \quad v^2q^2X_2+v^{-2}q^{-2}X_1-X_3=0\ .\nonumber 
\eeqa
Considering both equations for $v$ arbitrary, it implies $X_1=X_2$. Then it is important to notice that the combinations $X_i|_{u\leftrightarrow v}=-X_i$ for $i=1,2,3$. As now $X_3=(v^2q^2+v^{-2}q^{-2})X_1$ and $u$ is arbitrary, it immediately follows $X_3\equiv X_1\equiv X_2\equiv 0$. Plugged into $(ii)$, $(ii')$ we obtain $X_4\equiv X_5\equiv 0$. In terms of the currents, these equalities lead to the commutation relations (\ref{ec1}), (\ref{ec3}), (\ref{ec4}), (\ref{ec16}). 

As a consequence of these relations, after some straightforward calculations one finds that the equations $(iii),(iii')$ drastically simplify into the relations (\ref{ec5}).

Let us now consider the equations $(iv),(vi),(x),(xi)$ above. Proceeding similarly, let us introduce 
\beqa
&&Y_1\equiv (q+q^{-1})\big(U\big[C(v),{\cW}_+(u)\big]_q - V\big[C(u),{\cW}_+(v)\big]_q + (q-q^{-1})\big({\cW}_-(v)C(u)-{\cW}_-(u)C(v)   \big)\big)\ ,\nonumber\\
&&Y_2\equiv (q+q^{-1})\big(U\big[{\cW}_-(u),C(v)\big]_q - V\big[{\cW}_-(v),C(u)\big]_q + (q-q^{-1})\big({\cW}_+(v)C(u)-{\cW}_+(u)C(v)   \big)\big)\ ,\nonumber\\
&&Y_3\equiv \big[C(u),{\cW}_+(v)\big]+\big[{\cW}_+(u),C(v)\big]\ ,\nonumber\\
&&Y_4\equiv \big[C(u),{\cW}_-(v)\big]+\big[{\cW}_-(u),C(v)\big]\ .\nonumber
\eeqa
In terms of these combinations, the equations $(iv),(vi),(x),(xi)$ become, respectively,
\beqa
(iv) && \Leftrightarrow \quad u\big(qY_1+q(v^2+v^{-2})Y_3+(q-q^{-1})Y_4)\big)\nonumber\\
 &&\qquad \quad  + \ u^{-1}\big(q^{-1}Y_2-q^{-1}(v^2+v^{-2})Y_4+(q-q^{-1})Y_3)\big) = 0\ ,\nonumber\\
(vi) && \Leftrightarrow \quad u\big(qY_2-q(v^2+v^{-2})Y_4+q^2(q-q^{-1})Y_3)\big)\nonumber\\
 &&\qquad \quad  + \ u^{-1}\big(q^{-1}Y_1+q^{-1}(v^2+v^{-2})Y_3+q^{-2}(q-q^{-1})Y_4)\big) = 0\ ,\nonumber\\
(x) && \Leftrightarrow \quad v\big(Y_2-q(q+q^{-1})UY_4+(q^2-q^{-2})Y_3)\big)\nonumber\\
 &&\qquad \quad  + \ v^{-1}\big(Y_1+q^{-1}(q+q^{-1})UY_3+(q^2-q^{-2})Y_4)\big) = 0\ ,\nonumber\\
(xi) && \Leftrightarrow \quad v\big(Y_1+q(q+q^{-1})UY_3)\big)\nonumber\\
 &&\qquad \quad  + \ v^{-1}\big(Y_2-q^{-1}(q+q^{-1})UY_4)\big) = 0\ .\nonumber
\eeqa
The variables $u,v$ and deformation parameter $q$ being arbitrary, compatibility of these equations implies $Y_1\equiv Y_2\equiv Y_3\equiv Y_4\equiv 0$. Replacing the explicit expression of $C(u)$ into $Y_i$, one ends up with the commutation relations (\ref{ec6}), (\ref{ec7}), (\ref{ec8}) for the current ${\cG}_-(u)$. Similar analysis for the remaining equations $(v),(vii),(viii),(ix)$ imply (\ref{ec6}), (\ref{ec7}), (\ref{ec8}) for ${\cG}_+(u)$. Finally, from $(xii),(xiii)$ we immediately obtain (\ref{ec9}). 

Surjectivity of the map being shown, the injectivity of the homomorphism follows from the fact that $\Phi$ is invertible for $u$ generic. This completes the proof.
\end{proof}

Quantum affine algebras are known to be Hopf algebras, thanks to the existence of a coproduct, counit and antipode actions. Although the explicit Hopf algebra isomorphism between Drinfeld's new realization (currents) and Drinfeld-Jimbo construction is still an open problem, several results are already known (see for instance \cite{DF}). 
For the new current algebra (\ref{ec1})-(\ref{ec16}), it is also important to exhibit analogous properties. Actually, solely using the results of \cite{Skly88} a coaction map \cite{Chari} can be easily identified. 
\begin{prop}\label{propdelta} For any $k_\pm,w\in {\mathbb C}^*$, there exists an algebra homomorphism $\delta_w: O_q(\widehat{sl_2})\mapsto U_q(sl_2)\times O_q(\widehat{sl_2})$ such that
\beqa
\delta_w(\cW_\pm(u))\!\!&=&\!\!\!\left((q-q^{-1})^2S_\pm S_\mp -q(q^{\pm 2s_3}-q^{\mp 2s_3}) \right)\otimes
{\cW}_\mp(u)
-(w^2+w^{-2})I\!\!I\otimes {\cW}_{\pm}(u) 
\nonumber\\
&&\!\!\!\!+ \frac{(q-q^{-1})}{k_+k_-(q+q^{-1})}
\left(k_+w^{\pm 1}q^{\pm 1/2}S_+q^{\pm s_3}\otimes
{\cG}_{+}(u)+k_-w^{\mp 1}q^{\mp 1/2}S_-q^{\pm s_3}\otimes {\cG}_{-}(u)\right)\nonumber\\
&&\!\!\!\!+ (q+q^{-1})\left((k_+w^{\pm 1}q^{\pm 1/2}S_+q^{\pm s_3}+k_-w^{\mp 1}q^{\mp 1/2}S_-q^{\pm s_3})\otimes I\!\!I + q^{\pm 2s_3}\otimes U{\cW}_{\pm}(u)\right) ,\nonumber\\
\delta_w(\cG_\pm(u))&=&\!\!\!\frac{k_\mp}{k_\pm}(q-q^{-1})^2S_\mp^2\otimes
{\cG}_\mp(u)
-(w^2q^{\pm 2s_3}+w^{-2}q^{\mp 2s_3})\otimes {\cG}_{\pm}(u) + I\!\!I\otimes (q+q^{-1})U{\cG}_{\pm}(u) 
\nonumber\\
&&\ \ + \ (q+q^{-1})^2(q-q^{-1})\left(k_\mp w^{\pm 1}q^{\mp 1/2}S_\mp q^{ s_3}\otimes
(U{\cW}_{+}(u)-{\cW}_{-}(u))\right.\nonumber \\
&& \qquad \qquad \qquad \qquad \qquad \quad + \left.k_\mp w^{\mp 1}q^{\pm 1/2}S_\mp q^{- s_3}\otimes (U{\cW}_{-}(u)-{\cW}_{+}(u))\right)
\nonumber\\
&&\ \ + \ \frac{k_+k_-(q+q^{-1})^2}{(q-q^{-1})}\left((q+q^{-1})U +\frac{k_\mp}{k_\pm}(q-q^{-1})^2S_\mp^2 - (w^2q^{\pm 2s_3}+w^{-2}q^{\mp 2s_3}+1)  \right)\otimes I\!\!I\nonumber\ .
\eeqa
\end{prop}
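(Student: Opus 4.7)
The strategy is to derive $\delta_w$ from Sklyanin's dressing procedure (Proposition~\ref{propL}) combined with the isomorphism $\Phi$ of Theorem~\ref{th1}. Whenever $K(u)$ is a solution of (\ref{RE}), the matrix $L(u)K(u)L^{-1}(u^{-1})$ built from the Lax operator (\ref{Lax}) is a new solution whose entries lie in $U_q(sl_2)\otimes B_q(\widehat{sl_2})$. A one-parameter family of such dressings, indexed by $w\in{\mathbb C}^*$, can be obtained by combining a spectral-parameter rescaling $L(u)\mapsto L(u/w)$ with a $w$-dependent gauge transformation on the auxiliary space; both operations preserve (\ref{YB}) -- the rescaling because $R$ depends only on the ratio of spectral parameters, and the gauge because $R(u)$ commutes with $\mathrm{diag}(w,1)\otimes\mathrm{diag}(w,1)$ conjugation -- so Proposition~\ref{propL} still applies.

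Starting from the presentation of $B_q(\widehat{sl_2})$ in terms of $O_q(\widehat{sl_2})$ currents via (\ref{m1})--(\ref{m4}), form the dressed matrix $\tilde K(u)$, whose entries lie in $U_q(sl_2)\otimes O_q(\widehat{sl_2})$ after applying $\Phi$ entrywise. By Lemma~\ref{lem} applied in this larger algebra, these entries admit a unique decomposition of the form (\ref{m1})--(\ref{m4}) with uniquely determined currents $\tilde{\cW}_\pm(u),\tilde{\cG}_\pm(u)$. I would define $\delta_w$ on generators by $\delta_w(\cW_\pm(u))=\tilde{\cW}_\pm(u)$ and $\delta_w(\cG_\pm(u))=\tilde{\cG}_\pm(u)$. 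That $\delta_w$ is an algebra homomorphism is then automatic from Theorem~\ref{th1}: the reflection equation satisfied by $\tilde K(u)$, which holds by construction, is equivalent through $\Phi^{-1}$ to the defining relations (\ref{ec1})--(\ref{ec16}) being satisfied by $\tilde{\cW}_\pm(u),\tilde{\cG}_\pm(u)$.

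It only remains to identify the abstract currents $\tilde{\cW}_\pm(u),\tilde{\cG}_\pm(u)$ with the explicit formulas stated in the proposition. This is a direct matrix calculation: carry out the $2\times 2$ multiplication $L(u/w)K(u)L^{-1}(w/u)$ with $L^{-1}$ computed via the classical adjugate divided by the quantum determinant of $L$ (a central element of $U_q(sl_2)$), then invert the relations (\ref{m1})--(\ref{m4}) on each of the four entries. Products of two diagonal entries of $L,L^{-1}$ with $A(u)$ or $D(u)$ produce the contributions proportional to $q^{\pm 2s_3}\otimes U\cW_\pm(u)$ and $(w^2+w^{-2})\otimes\cW_\pm(u)$, with $U=(qu^2+q^{-1}u^{-2})/(q+q^{-1})$ arising naturally from the symmetric $u\leftrightarrow u^{-1}$ combinations in $L(u/w)$; mixed diagonal--off-diagonal products acting on $B(u),C(u)$ produce the $S_\pm q^{\pm s_3}\otimes\cG_\pm(u)$ terms with the $w^{\pm 1}q^{\pm 1/2}$ prefactors; and products of two off-diagonal entries of $L,L^{-1}$ generate the $S_\mp^2\otimes\cG_\mp(u)$ terms as well as the scalar $I\!\!I\otimes I\!\!I$ corrections.

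The main obstacle will be precisely this bookkeeping. The quantum Casimir-type combination $(q-q^{-1})^2 S_+S_- - q(q^{2s_3}-q^{-2s_3})$ that multiplies $\cW_-(u)$ in $\delta_w(\cW_+(u))$ emerges only after carefully combining diagonal-diagonal and off-diagonal-off-diagonal products, systematically using the commutation relation $q^{s_3}S_\pm q^{-s_3}=q^{\pm 1}S_\pm$ of $U_q(sl_2)$; and the last line of $\delta_w(\cG_\pm(u))$ -- a central element tensored with $I\!\!I$ -- is fixed by demanding that the $I\!\!I\otimes I\!\!I$ sector of the dressed $K$-matrix match the constant shifts $k_\pm(q+q^{-1})/(q-q^{-1})$ appearing in (\ref{m3})--(\ref{m4}).
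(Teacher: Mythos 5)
Your strategy coincides with the paper's own proof: the paper likewise obtains $\delta_w$ by dressing $K(u)$ with Lax operators via Proposition~\ref{propL} and re-expanding the result in the form (\ref{m1})--(\ref{m4}) (delegating the computation to \cite{BK}), and your supporting remarks --- uniqueness of the decomposition via Lemma~\ref{lem}, the homomorphism property via Theorem~\ref{th1} --- are sound and make explicit what the paper leaves implicit. The problem is the dressed matrix itself. The rescaled Lax operator is $L'(v)=L(v/w)$, so Proposition~\ref{propL} yields $L'(u)K(u)L'^{-1}(u^{-1})=L(u/w)\,K(u)\,L^{-1}\!\left(1/(uw)\right)$: the inverse factor must be evaluated at $u^{-1}/w$, not at $(u/w)^{-1}=w/u$ as you wrote. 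You substituted $u\mapsto u/w$ inside the symbol $u^{-1}$ instead of rescaling the argument of the second Lax operator. The paper's matrix $L(uw)K(u)L(uw^{-1})$ is the correct one; it agrees with $L'(u)K(u)L'^{-1}(u^{-1})$ for $L'(v)=L(vw)$ up to a central factor, because this Lax operator obeys the inversion identity $L(v)L(v^{-1})=c(v)I\!\!I$ with $c(v)$ central, so that $L(uw^{-1})\propto L^{-1}(u^{-1}w)$.

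This is not a cosmetic slip. By the same inversion identity, $L^{-1}(w/u)\propto L(u/w)$, so your matrix collapses to $L(u/w)K(u)L(u/w)$ with \emph{both} Lax factors at the same spectral parameter, whereas the reflection structure requires the ``incoming'' and ``reflected'' parameters $uw^{-1}$ and $uw$ (the product of the two arguments must equal $u^{2}$). Your matrix fails the reflection equation: in the proof of Proposition~\ref{propL}, the crucial exchange $L_1^{-1}(\beta_u)\,R_{12}(uv)\,L_2(\alpha_v)=L_2(\alpha_v)\,R_{12}(uv)\,L_1^{-1}(\beta_u)$ holds only when $\alpha_v/\beta_u=uv$, and your choice $\alpha_v=v/w$, $\beta_u=w/u$ gives $uv/w^{2}\neq uv$. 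Consequently the expansion cannot reproduce the stated formulas: in the correct dressing, the product of the $(1,1)$ entries of $L(uw)$ and $L(uw^{-1})$ equals $u^{2}qq^{2s_3}+u^{-2}q^{-1}q^{-2s_3}-w^{2}-w^{-2}$, which is precisely the source of the $q^{\pm 2s_3}\otimes U\cW_{\pm}(u)$ and $-(w^{2}+w^{-2})I\!\!I\otimes\cW_{\pm}(u)$ terms of the Proposition, whereas with both factors at $u/w$ one finds $(u/w)^{2}qq^{2s_3}+(w/u)^{2}q^{-1}q^{-2s_3}-2$, so the $w$-dependence contaminates $U$ and the claimed coefficients can never emerge. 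Once the spectral parameters are corrected (the diagonal gauge transformation you invoke is then unnecessary), your argument becomes the paper's proof with the bookkeeping filled in.
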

\begin{proof}
According to \cite{Skly88} (see Proposition \ref{propL}) and the Lax operator (\ref{Lax}), $L(uw)K(u)L(uw^{-1})$ is a solution $\forall w$ of (\ref{RE}). Expanding this expression using (\ref{m1})-(\ref{m4}), the new entries of $L(uw)K(u)L(uw^{-1})$ are found to take the form (\ref{m1})-(\ref{m4}) replacing $\cW_\pm(u)\rightarrow \delta_w(\cW_\pm(u))$,  $\cG_\pm(u)\rightarrow \delta_w(\cG_\pm(u))$. For more details, we refer the reader to \cite{BK} where  similar calculations have been performed.
\end{proof}

\section{Another presentation}
In \cite{BK}, an infinite dimensional algebra denoted below ${\cal A}_q$ was proposed in order to solve boundary integrable systems with hidden symmetries related with a coideal subalgebra of $U_q(\widehat{sl_2})$. However, its defining relations were essentially conjectured based on the commutation relations and properties of certain operators acting on irreducible finite dimensional tensor product of evaluation representations. The aim of this Section is to construct an analogue of Drinfeld's presentation for the current algebra (\ref{ec1})-(\ref{ec16}). As a consequence, it provides a rigorous derivation of the relations conjectured in \cite{BK}.
\begin{defn}[\cite{BK}] ${\cal A}_q$ is an associative algebra with parameter $\rho\in{\mathbb C}^*$, unit $1$ and generators $\{{\cW}_{-k},{\cW}_{k+1},{\cG}_{k+1},{\tilde{\cG}}_{k+1}|k\in {\mathbb Z}_+\}$ satisfying the following relations:
\begin{align}
\big[{\cW}_0,{\cW}_{k+1}\big]=\big[{\cW}_{-k},{\cW}_{1}\big]=\frac{1}{(q+q^{-1})}\big({\tilde{\cG}_{k+1} } - {{\cG}_{k+1}}\big)\ ,\label{qo1}\\
\big[{\cW}_0,{\cG}_{k+1}\big]_q=\big[{\tilde{\cG}}_{k+1},{\cW}_{0}\big]_q=\rho{\cW}_{-k-1}-\rho{\cW}_{k+1}\ ,\label{qo2}\\
\big[{\cG}_{k+1},{\cW}_{1}\big]_q=\big[{\cW}_{1},{\tilde{\cG}}_{k+1}\big]_q=\rho{\cW}_{k+2}-\rho{\cW}_{-k}\ ,\label{qo3}\\
\big[{\cW}_{-k},{\cW}_{-l}\big]=0\ ,\quad 
\big[{\cW}_{k+1},{\cW}_{l+1}\big]=0\ ,\label{qo4}\quad \\
\big[{\cW}_{-k},{\cW}_{l+1}\big]
+\big[{{\cW}}_{k+1},{\cW}_{-l}\big]=0\ ,\label{qo5}\\
\big[{\cW}_{-k},{\cG}_{l+1}\big]
+\big[{{\cG}}_{k+1},{\cW}_{-l}\big]=0\ ,\label{qo6}\\
\big[{\cW}_{-k},{\tilde{\cG}}_{l+1}\big]
+\big[{\tilde{\cG}}_{k+1},{\cW}_{-l}\big]=0\ ,\label{qo7}\\
\big[{\cW}_{k+1},{\cG}_{l+1}\big]
+\big[{{\cG}}_{k+1},{\cW}_{l+1}\big]=0\ ,\label{qo8}\\
\big[{\cW}_{k+1},{\tilde{\cG}}_{l+1}\big]
+\big[{\tilde{\cG}}_{k+1},{\cW}_{l+1}\big]=0\ ,\label{qo9}\\
\big[{\cG}_{k+1},{\cG}_{l+1}\big]=0\ ,\quad   \big[{\tilde{\cG}}_{k+1},\tilde{{\cG}}_{l+1}\big]=0\ ,\label{qo10}\\
\big[{\tilde{\cG}}_{k+1},{\cG}_{l+1}\big]
+\big[{{\cG}}_{k+1},\tilde{{\cG}}_{l+1}\big]=0\ .\label{qo11}
\end{align}
\end{defn}

A natural ordering of ${\cal A}_q$ arises from the study of the commutation relations above.
Indeed, starting from monomials of lowest  $k=0,1,...$  and using (\ref{qo1}) possible definitions of ${\cG}_1,{\tilde{\cG}}_{1}$ are such that $\mathrm{d}[{\cG}_1]=\mathrm{d}[{\tilde{\cG}}_{1}]\leq 2$, where $\mathrm{d}$ denotes the degree of the monomials in the elements ${\cW}_0,{\cW}_1$. By induction, from (\ref{qo2}), (\ref{qo3}) with (\ref{qo1}) one immediately gets:
\begin{cor} The elements of ${\cal A}_q$ are monomials in ${\cW}_0,{\cW}_1$ of degree:
\beqa
\qquad \qquad \mathrm{d}[{\cW}_{-k}]=\mathrm{d}[{{\cW}}_{k+1}]\leq 2k+1 \qquad \mbox{and} \qquad \mathrm{d}[{\cG}_{k+1}]=\mathrm{d}[{\tilde{\cG}}_{k+1}]\leq 2k+2 \ , \qquad k\in{\mathbb Z}_+. \label{order}
\eeqa
\end{cor}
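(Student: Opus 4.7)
The plan is to prove both degree bounds by strong induction on $k \in \mathbb{Z}_+$, treating the four families $\cW_{-k}, \cW_{k+1}, \cG_{k+1}, \tilde{\cG}_{k+1}$ simultaneously and using relations (\ref{qo1}), (\ref{qo2}), (\ref{qo3}) as recursions that express higher-index generators in terms of lower-index ones.

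For the base case $k=0$, the equalities $\mathrm{d}[\cW_0] = \mathrm{d}[\cW_1] = 1$ hold by convention. Relation (\ref{qo1}) at $k=0$ reads $\tilde{\cG}_1 - \cG_1 = (q+q^{-1})[\cW_0, \cW_1]$, whose right-hand side is of degree $\leq 2$ in $\cW_0, \cW_1$. Following the discussion preceding the statement, I fix a representative of $\cG_1$ as a polynomial in $\cW_0, \cW_1$ of degree $\leq 2$; the displayed identity then forces $\mathrm{d}[\tilde{\cG}_1] \leq 2$ as well.

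For the inductive step, assume the bounds hold for all indices $\leq k$. Rearranging (\ref{qo2}) and (\ref{qo3}) gives
$$ \cW_{-k-1} = \cW_{k+1} + \frac{1}{\rho}\bigl[\cW_0, \cG_{k+1}\bigr]_q, \qquad \cW_{k+2} = \cW_{-k} + \frac{1}{\rho}\bigl[\cG_{k+1}, \cW_1\bigr]_q. $$
By the inductive hypothesis $\mathrm{d}[\cG_{k+1}] \leq 2k+2$ and $\mathrm{d}[\cW_{-k}], \mathrm{d}[\cW_{k+1}] \leq 2k+1$, so each $q$-commutator on the right has degree at most $1 + (2k+2) = 2k+3$, hence $\mathrm{d}[\cW_{-k-1}], \mathrm{d}[\cW_{k+2}] \leq 2(k+1)+1$. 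Relation (\ref{qo1}) at index $k+1$ then yields $\tilde{\cG}_{k+2} - \cG_{k+2} = (q+q^{-1})[\cW_0, \cW_{k+2}]$, of degree $\leq 2(k+1)+2$; choosing $\cG_{k+2}$ as a polynomial in $\cW_0, \cW_1$ of that same degree propagates the bound to $\tilde{\cG}_{k+2}$, completing the induction.

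The main obstacle is ensuring that a polynomial representative for $\cG_{k+2}$ of degree $\leq 2k+4$ in $\cW_0, \cW_1$ can always be produced: the relations (\ref{qo2})-(\ref{qo3}) at index $k+1$ only fix the $q$-commutators $[\cW_0, \cG_{k+2}]_q$ and $[\cG_{k+2}, \cW_1]_q$, leaving \emph{a priori} ambiguity modulo the kernels of these $q$-adjoint actions. The way around this is to regard $\mathcal{A}_q$ as generated by $\cW_0, \cW_1$ alone — consistent with the ordering remark preceding the corollary and, ultimately, with the $q$-Onsager identification established later in the paper — so that every $\cG_{k+2}$ is by construction a polynomial expression in $\cW_0, \cW_1$ whose degree is then controlled by the recursion above.
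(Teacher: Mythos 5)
Your proof is correct and follows essentially the same route as the paper: an induction on $k$ using (\ref{qo1}) for the base case and the rearranged relations (\ref{qo2}), (\ref{qo3}) together with (\ref{qo1}) at the next index as the recursion, which is precisely the argument the paper sketches in the paragraph preceding the corollary. Your explicit acknowledgement that the relations only fix the difference $\tilde{\cG}_{k+2}-\cG_{k+2}$, resolved by treating $\cW_0,\cW_1$ as the generating pair, matches the paper's implicit treatment (its phrase ``possible definitions'' and the subsequent Remark fixing $\cG_1,\tilde{\cG}_1$ up to a scalar), so there is no substantive difference in approach.
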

Note that writing explicitly all higher elements of ${\cal A}_q$ in terms of ${\cW}_0,{\cW}_1$ is essentially related with the construction of a Poincare-Birkoff-Witt basis for the algebra considered in the next Section, a problem that will be considered elsewhere. 

\begin{rem} According to the ordering (\ref{order}), the elements ${\cG}_{1},{\tilde{\cG}}_{1}\in{\cal A}_q$ are uniquely determined:
\beqa
{\cG}_{1} = \big[{\cW}_{1},{\cW}_{0}\big]_q + \alpha \qquad \mbox{and} \qquad  {\tilde{\cG}}_{1}=\big[{\cW}_{0},{\cW}_{1}\big]_q +  \alpha \qquad \forall  \alpha\ \in {\mathbb C}\ .\label{g1} 
\eeqa
\end{rem}

For the derivation of the second theorem, several other equalities will be required which can all be deduced from the relations above and (\ref{g1}). Indeed, let us show the following. 
\begin{prop}{\label{prop3}} If (\ref{qo1})-(\ref{qo11}) are satisfied, then the following relations hold:
\beqa
&&\qquad \qquad \quad \big[{\cW}_{-k-1},{\cW}_{l+1}\big] - \big[{\cW}_{-k},{\cW}_{l+2}\big] = \frac{q-q^{-1}}{\rho(q+q^{-1})}\big({\cG}_{k+1}\tilde{{\cG}}_{l+1}-{\cG}_{l+1}\tilde{{\cG}}_{k+1}\big)\ ,\label{h1}\\
&& \qquad \qquad \quad-{\cW}_{-k}{\cW}_{0} + {\cW}_{k+1}{\cW}_{1} - {\cW}_{-k-1}{\cW}_{1} + {\cW}_{0}{\cW}_{k+2} - \frac{1}{\rho(q^2-q^{-2})}\big[{\cG}_{k+1},{\tilde{\cG}}_{1}\big]=0\ ,\label{h2}
\eeqa
\beqa
&&\qquad \qquad \quad{\cW}_{-k-1}{\cW}_{-l} - {\cW}_{k+2}{\cW}_{l+1} - {\cW}_{-k}{\cW}_{-l-1} + {\cW}_{k+1}{\cW}_{l+2}\label{h3}\\
&&\qquad \qquad \quad+{\cW}_{-k}{\cW}_{l+1} - {\cW}_{-l}{\cW}_{k+1} - {\cW}_{-k-1}{\cW}_{l+2} + {\cW}_{-l-1}{\cW}_{k+2}\nonumber\\
&&\qquad \qquad \qquad \qquad \quad + \frac{1}{\rho(q^2-q^{-2})}
\big(\big[{\cG}_{k+2},{\tilde{\cG}}_{l+1}\big]-\big[{\cG}_{k+1},{\tilde{\cG}}_{l+2}\big]\big)=0\ ,\nonumber
\eeqa
\beqa
&&\qquad \qquad \quad \big[{\cG}_{l+1},{\cW}_{k+2}\big]_q - \big[{\cG}_{k+1},{\cW}_{l+2}\big]_q - (q-q^{-1})\big({\cW}_{-k}{\cG}_{l+1}-{\cW}_{-l}{\cG}_{k+1}\big)=0\ ,\label{h4}\\
&&\qquad \qquad \quad \big[{\cW}_{-k-1},{\cG}_{l+1}\big]_q - \big[{\cW}_{-l-1},{\cG}_{k+1}\big]_q - (q-q^{-1})\big({\cW}_{k+1}{\cG}_{l+1}-{\cW}_{l+1}{\cG}_{k+1}\big)=0\ ,\label{h5}\\  
&&\qquad \qquad \quad \big[{\tilde{\cG}}_{l+1},{\cW}_{-k-1}\big]_q - \big[{\tilde{\cG}_{k+1}},{\cW}_{-l-1}\big]_q - (q-q^{-1})\big({\cW}_{k+1}{\tilde{\cG}}_{l+1}-{\cW}_{l+1}{\tilde{\cG}_{k+1}}\big)=0\ ,\label{h6}\\
&&\qquad \qquad \quad \big[{\cW}_{k+2},{\tilde{\cG}}_{l+1}\big]_q - \big[{\cW}_{l+2},{\tilde{\cG}}_{k+1}\big]_q - (q-q^{-1})\big({\cW}_{-k}{\tilde{\cG}}_{l+1}-{\cW}_{-l}{\tilde{\cG}}_{k+1}\big)=0\ .\label{h7}
\eeqa
\end{prop}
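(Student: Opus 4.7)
The overall strategy is to regard (qo2) and (qo3) as ``index-shift'' rules that express the higher generators $\mathcal{W}_{-k-1}$ and $\mathcal{W}_{k+2}$ as q-brackets of $\mathcal{W}_0$, $\mathcal{W}_1$ with $\mathcal{G}_{k+1}, \tilde{\mathcal{G}}_{k+1}$, and to derive (h1)-(h7) by repeated substitution together with the commutation/anti-commutation relations (qo4)-(qo11). In each case one produces a nested bracket which, after reordering via the Jacobi identity, is rewritten in terms of known lower-index relations.

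My plan is to prove (h4)-(h7) first, since they are single q-commutator identities and serve as building blocks for (h1)-(h3). For instance, to obtain (h5), I substitute
$\rho\mathcal{W}_{-k-1}=[\mathcal{W}_0,\mathcal{G}_{k+1}]_q+\rho\mathcal{W}_{k+1}$
from (qo2) into $[\mathcal{W}_{-k-1},\mathcal{G}_{l+1}]_q$, expand the resulting nested q-bracket
$[[\mathcal{W}_0,\mathcal{G}_{k+1}]_q,\mathcal{G}_{l+1}]_q$
into ordinary commutators, and use (qo8), (qo10), (qo11) to move $\mathcal{G}$'s through $\mathcal{W}$'s and to swap $\mathcal{G}_{k+1}\mathcal{G}_{l+1}$ with its $(k,l)$-transpose. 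Antisymmetrizing in $(k,l)$ collapses the many cross-terms to (h5); (h4), (h6), (h7) follow in the same way after exchanging the roles of $\mathcal{W}_0\leftrightarrow\mathcal{W}_1$ and/or $\mathcal{G}\leftrightarrow\tilde{\mathcal{G}}$ via (qo2)-(qo3).

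With (h4)-(h7) in hand I turn to (h1). I write
$\rho[\mathcal{W}_{-k-1},\mathcal{W}_{l+1}]=[[\mathcal{W}_0,\mathcal{G}_{k+1}]_q,\mathcal{W}_{l+1}]$
(the $\mathcal{W}_{k+1}$-piece drops by (qo4)) and analogously
$\rho[\mathcal{W}_{-k},\mathcal{W}_{l+2}]=[\mathcal{W}_{-k},[\mathcal{G}_{l+1},\mathcal{W}_1]_q]$. Subtracting, using Jacobi, and applying (h5), (h7), (qo6), (qo8) to simplify the inner commutators, the $\mathcal{W}\mathcal{G}$-terms cancel pairwise and what remains is the combination $\mathcal{G}_{k+1}\tilde{\mathcal{G}}_{l+1}-\mathcal{G}_{l+1}\tilde{\mathcal{G}}_{k+1}$ with the prefactor predicted by (h1). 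Identity (h2) is then the specialization $(k,l)=(k,0)$ of an analogous manipulation combining (h1) with (qo1) to express $\tilde{\mathcal{G}}_1-\mathcal{G}_1$ via $[\mathcal{W}_0,\mathcal{W}_1]$, and (h3) arises by applying the same strategy to the commutators $[\mathcal{W}_{-k-1},\mathcal{W}_{-l}]-[\mathcal{W}_{-k},\mathcal{W}_{-l-1}]$ and $[\mathcal{W}_{k+2},\mathcal{W}_{l+1}]-[\mathcal{W}_{k+1},\mathcal{W}_{l+2}]$ (each of which is an (h1)-type identity with signs of indices adjusted), and then combining the two via the antisymmetric relations (qo5).

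The main obstacle will be the bookkeeping of the nested q-commutators: unlike $[\cdot,\cdot]$, the q-bracket $[\cdot,\cdot]_q$ has no clean Jacobi identity, so every nested q-bracket must be split into an ordinary commutator plus a $(q-q^{-1})$-correction, and the resulting eight-term expansions must be reshuffled using (qo8)-(qo11). The key saving is that every such correction term is of the form $\mathcal{W}\mathcal{G}$ or $\mathcal{G}\mathcal{G}$, and the sign structure of the antisymmetric relations (qo5)-(qo7), (qo11) forces these corrections to combine precisely into the right-hand sides of (h1)-(h7); the computation is routine once one commits to an ordering (e.g., always placing $\mathcal{W}_0$ or $\mathcal{W}_1$ to the left).
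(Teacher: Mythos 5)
Your plan for (\ref{h4})--(\ref{h7}) is correct, and it is in substance the paper's own proof of these identities: substitute (\ref{qo2}) (resp.\ (\ref{qo3})) into one side, antisymmetrize in $(k,l)$ so that the nested $q$-brackets such as $\big[[{\cW}_0,{\cG}_{k+1}]_q,{\cG}_{l+1}\big]_q-\big[[{\cW}_0,{\cG}_{l+1}]_q,{\cG}_{k+1}\big]_q$ cancel by (\ref{qo10}), and absorb the residual $(q-q^{-1})$-terms with (\ref{qo8}). Your idea for (\ref{h2}) also works, and is in fact a different, self-contained route: since $[{\cG}_{k+1},{\cG}_1]=0$ by (\ref{qo10}), relation (\ref{qo1}) lets you replace $\tilde{\cG}_1$ by $(q+q^{-1})[{\cW}_0,{\cW}_1]$ inside the commutator (the reference to (\ref{h1}) is not what makes this work), after which the Jacobi identity, (\ref{qo2})--(\ref{qo4}) and one more use of (\ref{qo1}) produce exactly the ordered products in (\ref{h2}). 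The paper instead substitutes the stronger identity (\ref{g1}), $\tilde{\cG}_1=[{\cW}_0,{\cW}_1]_q+\alpha$, so your version has the mild advantage of staying inside (\ref{qo1})--(\ref{qo11}).

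The genuine gaps are in (\ref{h1}) and (\ref{h3}). For (\ref{h1}), the toolkit you announce --- Jacobi together with (\ref{h5}), (\ref{h7}), (\ref{qo6}), (\ref{qo8}) --- cannot produce the right-hand side: none of these relations converts a ${\cW}{\cW}$-bracket into ${\cG}$'s, so no amount of reshuffling with them can create the bilinears ${\cG}_{k+1}\tilde{\cG}_{l+1}-{\cG}_{l+1}\tilde{\cG}_{k+1}$, and the claimed ``pairwise cancellation'' is unjustified. The indispensable ingredients are (\ref{qo1}), applied to the inner commutators $[{\cW}_0,{\cW}_{l+1}]$ and $[{\cW}_{-k},{\cW}_1]$ produced by your Jacobi step (this is the only possible source of ${\cG}$-bilinears), together with (\ref{qo5}) (to trade $[{\cW}_{-l-1},{\cW}_{k+1}]$ for $[{\cW}_{-k},{\cW}_{l+2}]$) and (\ref{qo11}) (to merge the two kinds of bilinears); meanwhile (\ref{h5}) and (\ref{h7}) play no visible role. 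This is in fact how the paper argues, and more directly: expand $[{\cW}_{-k-1},{\cW}_{l+1}]$ with (\ref{qo2}), combine ${\cW}_0$ with ${\cW}_{l+1}$ via (\ref{qo1}), then close with (\ref{qo5}) and (\ref{qo11}). For (\ref{h3}) your starting point is vacuous: all four brackets $[{\cW}_{-k-1},{\cW}_{-l}]$, $[{\cW}_{-k},{\cW}_{-l-1}]$, $[{\cW}_{k+2},{\cW}_{l+1}]$, $[{\cW}_{k+1},{\cW}_{l+2}]$ vanish identically by (\ref{qo4}), so they are not ``(\ref{h1})-type identities'' and carry no information. Relation (\ref{h3}), like (\ref{h2}), constrains specific \emph{orderings} of products of the ${\cW}$'s together with the difference $[{\cG}_{k+2},\tilde{\cG}_{l+1}]-[{\cG}_{k+1},\tilde{\cG}_{l+2}]$; it has to be attacked by reducing these ${\cG}$-commutators, as the paper does along the lines of its (\ref{h2}) computation, not by manipulating ${\cW}{\cW}$-commutators that are already zero.
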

\begin{proof}
To show (\ref{h1}), let us consider the first commutator. Expand it using (\ref{qo2}). Combining ${\cW}_0$ and ${\cW}_{l+1}$ using (\ref{qo1}), one finds:
\beqa
\big[{\cW}_{-k-1},{\cW}_{l+1}\big] &=& \frac{q}{\rho(q+q^{-1})}\big(\tilde{{\cG}}_{l+1}{\cG}_{k+1}-\tilde{{\cG}}_{k+1}{\cG}_{l+1}\big)\nonumber\\ 
&&+ \frac{q^{-1}}{\rho(q+q^{-1})}\big({\cG}_{l+1}\tilde{{\cG}}_{k+1}-{\cG}_{k+1}\tilde{{\cG}}_{l+1}\big) +
\big[{\cW}_{-l-1},{\cW}_{k+1}\big]\ .\nonumber
\eeqa
Then, using (\ref{qo5}) and (\ref{qo11}) one obtains (\ref{h1}).
  
Consider now (\ref{h2}). Introduce (\ref{g1}) in the last commutator, and expand using (\ref{qo2}) and (\ref{qo3}). Collecting terms and simplifying, one obtains (\ref{h2}). Equation (\ref{h3}), although technically slightly more complicated, is derived along the same line.

To show (\ref{h4})-(\ref{h7}), the same procedure will be used so we only explain (\ref{h4}). Consider the two commutators and expand using (\ref{qo3}). Then, using (\ref{qo8}) and (\ref{qo11}), one verifies that (\ref{h4}) is indeed satisfied. 
\end{proof}

By analogy with Drinfeld's construction, we are now looking for an infinite dimensional set of elements of an algebra in terms of which the currents $\cW_\pm(u)$, $\cG_\pm(u)$ can be expanded. According to the structure of the equations (\ref{ec1})-(\ref{ec16}) defining the current algebra - in particular the dependence in the formal variable $U,V$ - we obtain the second main result of the paper. 

\begin{thm}\label{th2} Define the formal variable $U=(qu^2+q^{-1}u^{-2})/(q+q^{-1})$. Let $\Psi: O_q(\widehat{sl_2}) \mapsto {\cal A}_q $ be the map defined by
\begin{align}
{\cW}_+(u)=\sum_{k\in {\mathbb Z}_+}{\cW}_{-k}U^{-k-1} \ , \quad {\cW}_-(u)=\sum_{k\in {\mathbb Z}_+}{\cW}_{k+1}U^{-k-1} \ ,\label{c1}\\
 \quad {\cG}_+(u)=\sum_{k\in {\mathbb Z}_+}{\cG}_{k+1}U^{-k-1} \ , \quad {\cG}_-(u)=\sum_{k\in {\mathbb Z}_+}\tilde{{\cG}}_{k+1}U^{-k-1} \ .\label{c2}
\end{align}
Then, $\Psi$ is an algebra isomorphism between  $O_q(\widehat{sl_2})$ and ${\cal A}_q$.
\end{thm}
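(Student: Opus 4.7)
The strategy is to substitute the mode expansions (\ref{c1})--(\ref{c2}) into each defining relation (\ref{ec1})--(\ref{ec16}) of $O_q(\widehat{sl_2})$, expand both sides as formal double series in $U^{-1},V^{-1}$, and identify coefficients of $U^{-a}V^{-b}$. The resulting identities in ${\cal A}_q$ will be shown to follow from the defining relations (\ref{qo1})--(\ref{qo11}) together with the derived consequences (\ref{h1})--(\ref{h7}) of Proposition~\ref{prop3}, so that $\Psi$ is a well-defined homomorphism. In the other direction, the modes are extracted as coefficients, giving a candidate inverse.

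First I would dispatch the relations without explicit $U,V$ prefactors, namely (\ref{ec1}), (\ref{ec3}), (\ref{ec8}), (\ref{ec9}), (\ref{ec16}). Here the substitution produces a pure double series in $U^{-k-1}V^{-l-1}$, and coefficient matching yields immediately (\ref{qo4}), (\ref{qo5}), (\ref{qo6})--(\ref{qo9}), (\ref{qo10}), (\ref{qo11}), respectively. Next, for (\ref{ec4}) I would distribute $U-V$ on the left, which produces an index shift: the ``boundary'' coefficients (of $U^{0}V^{-l-1}$ and $U^{-k-1}V^{0}$) yield exactly relation (\ref{qo1}), while the ``interior'' coefficients (of $U^{-k-1}V^{-l-1}$ with $k,l\geq 0$) yield (\ref{h1}). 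The same bookkeeping applied to (\ref{ec6})--(\ref{ec7}), whose $q$-commutators are weighted by $U$ and $V$, recovers (\ref{qo2})--(\ref{qo3}) from the boundary and (\ref{h4})--(\ref{h7}) from the interior.

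The relation (\ref{ec5}) is the most delicate because of the rational prefactor $(1-UV)/(U-V)$. I would multiply through by $(U-V)$ to obtain a polynomial identity in $U,V$, then match coefficients: the bulk mixed terms reproduce (\ref{h3}), while the lowest mixed coefficient reproduces (\ref{h2}) after one uses the explicit form (\ref{g1}) of ${\cG}_1$ and $\tilde{\cG}_1$ to recognize the commutator $[{\cG}_{k+1},\tilde{\cG}_{1}]$. The boundary coefficients collapse, once multiplied by $(U-V)$, to consequences of the equations already established. Taken together, these checks confirm that $\Psi$ respects every defining relation of $O_q(\widehat{sl_2})$.

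For bijectivity, surjectivity is immediate since each generator ${\cW}_{-k},{\cW}_{k+1},{\cG}_{k+1},\tilde{\cG}_{k+1}$ is the coefficient of $U^{-k-1}$ in the image current, hence lies in $\Psi(O_q(\widehat{sl_2}))$. For injectivity, I would define $\Psi^{-1}:{\cal A}_q\to O_q(\widehat{sl_2})$ on generators by the reverse assignment, sending each mode to the corresponding coefficient of $U^{-k-1}$ inside the $O_q$-current; the same coefficient analysis, now read inside $O_q(\widehat{sl_2})$, shows these coefficients automatically satisfy (\ref{qo1})--(\ref{qo11}), so $\Psi^{-1}$ is well-defined and is the two-sided inverse of $\Psi$ on generators by construction. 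The principal technical obstacle I anticipate is the treatment of (\ref{ec5}): clearing the $(U-V)$ denominator, carefully tracking the index shifts induced by multiplication with $U$ and $V$ in a double formal series, and invoking (\ref{g1}) at precisely the right step so that the emergent polynomial identity is recognized as (\ref{h2})--(\ref{h3}); the other relations are more straightforward but still require careful separation of boundary and interior contributions.
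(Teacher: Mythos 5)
Your proposal is correct and follows essentially the same route as the paper's proof: plugging the expansions (\ref{c1})--(\ref{c2}) into (\ref{ec1})--(\ref{ec16}), matching coefficients of $U^{-k}V^{-l}$ to recover (\ref{qo1})--(\ref{qo11}) together with the higher relations (\ref{h1})--(\ref{h7}), and invoking Proposition \ref{prop3} to conclude that the defining relations of ${\cal A}_q$ suffice for $\Psi$ to respect all relations. Your construction of a formal inverse by mode extraction is simply the algebraic phrasing of the paper's analyticity/Cauchy-theorem argument for injectivity, so the two proofs coincide in substance.
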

\begin{proof} Plugging (\ref{c1}), (\ref{c2}) into (\ref{ec1})-(\ref{ec16}), expanding and identifying terms of same order in $U^{-k}V^{-l}$ one finds all defining relations (\ref{qo1})-(\ref{qo11}), together with the set of higher relations (\ref{h1})-(\ref{h7}). From Proposition \ref{prop3}, it follows that the sixteen independent algebraic relations (\ref{qo1})-(\ref{qo11}) are sufficient i.e. the map is surjective. The currents being analytic in the variable $U\in {\mathbb C}$, according to Cauchy's theorem any element of ${\cal A}_q$ is uniquely determined from 
the currents using contour integrals. The injectivity of the map follows, which completes the proof.
\end{proof}

It is important to stress that in \cite{BK}, commutation relations among the so-called transfer matrix were used to derive some of the relations (\ref{qo1})-(\ref{qo11}). However, the derivation described above uses solely the reflection equation algebra. Consequently, this theorem establishes a rigorous proof of the relations conjectured in \cite{BK}. In addition, for the case of the reflection equation algebra with the $U_q(\widehat{sl_2})$ $R$-matrix it shows that the presentation $\{{\cW}_{-k},{\cW}_{k+1},{\cG}_{k+1},{\tilde{\cG}}_{k+1}|k\in {\mathbb Z}_+\}$ is the ``boundary'' analogue of Drinfeld's one.

\section{intertwiner of the $q-$Onsager (tridiagonal) algebra\\
 and the reflection equation}
The purpose of this Section is to exhibit an intertwiner $K(u)$ of the $q-$Onsager algebra, to show its uniqueness and that it coincides exactly with the solution $K(u)$ of the reflection equation (\ref{RE}). The final aim is actually to establish the isomorphism between the new current algebra and the $q-$Onsager algebra. Although the reader may be familiar with the ideas of \cite{J1}, it will be useful to first recall some well-known results. Indeed, the procedure we follow to construct the intertwiner is analogous to the one described in \cite{J1}. In the context of quantum integrable systems, note that for finite dimensional representations intertwiners have already been obtained along the same line in \cite{NM,DMS,Nep,DelG,DM}.\vspace{3mm}

{\bf a. The $R-$matrix as an intertwiner of $U_q(\widehat{sl_2})$} \cite{J1}.\\ 
In \cite{J1}, Jimbo pointed out that intertwiners $R$ of quantum loop algebras lead to trigonometric solutions of the quantum Yang-Baxter equation (\ref{YB}). Any tensor product of two evaluation representations with generic evaluation parameters $u$ and $v$ being indecomposable, by Schur's lemma the solution $R$ is unique up to an overall scalar factor. In particular, considering the quantum affine algebra $U_{q}(\widehat{sl_2})$ the construction of the solution $R(u)$ given by (\ref{R}) goes as follow. 

First, we need to recall the realization of the quantum affine algebra $U_{q}(\widehat{sl_2})$ in the Chevalley presenation $\{H_j,E_j,F_j\}$, $j\in \{0,1\}$ (see e.g \cite{Chari}): 
\begin{defn} Define the extended Cartan matrix $\{a_{ij}\}$ ($a_{ii}=2$,\ $a_{ij}=-2$ for $i\neq j$). The quantum affine algebra $U_{q}(\widehat{sl_2})$ is generated by the elements $\{H_j,E_j,F_j\}$, $j\in \{0,1\}$ which satisfy the defining relations
\beqa [H_i,H_j]=0\ , \quad [H_i,E_j]=a_{ij}E_j\ , \quad
[H_i,F_j]=-a_{ij}F_j\ ,\quad
[E_i,F_j]=\delta_{ij}\frac{q^{H_i}-q^{-H_i}}{q-q^{-1}}\
\nonumber\eeqa
together with the $q-$Serre relations
\beqa [E_i,[E_i,[E_i,E_j]_{q}]_{q^{-1}}]=0\ ,\quad \mbox{and}\quad
[F_i,[F_i,[F_i,F_j]_{q}]_{q^{-1}}]=0\ . \label{defUq}\eeqa
The sum ${\it K}=H_0+H_1$ is the central element of the algebra. The
Hopf algebra structure is ensured by the existence of a
comultiplication $\Delta: U_{q}(\widehat{sl_2})\mapsto U_{q}(\widehat{sl_2})\otimes U_{q}(\widehat{sl_2})$, antipode ${\cal S}: U_{q}(\widehat{sl_2})\mapsto U_{q}(\widehat{sl_2})$ 
and a counit ${\cal E}: U_{q}(\widehat{sl_2})\mapsto {\mathbb C}$ with
\beqa \Delta(E_i)&=&E_i\otimes q^{-H_i/2} +
q^{H_i/2}\otimes E_i\ ,\nonumber \\
 \Delta(F_i)&=&F_i\otimes q^{-H_i/2} + q^{H_i/2}\otimes F_i\ ,\nonumber\\
 \Delta(H_i)&=&H_i\otimes I\!\!I + I\!\!I \otimes H_i\ ,\label{coprod}
\eeqa
\beqa {\cal S}(E_i)=-E_iq^{-H_i}\ ,\quad {\cal S}(F_i)=-q^{H_i}F_i\ ,\quad {\cal S}(H_i)=-H_i \qquad {\cal S}({I\!\!I})=1\
\label{antipode}\nonumber\eeqa
and\vspace{-0.3cm}
\beqa {\cal E}(E_i)={\cal E}(F_i)={\cal
E}(H_i)=0\ ,\qquad {\cal E}({I\!\!I})=1\
.\label{counit}\nonumber\eeqa
\end{defn}
Note that the opposite coproduct $\Delta'$ can be similarly defined with $\Delta'\equiv \sigma
\circ\Delta$ where the permutation map $\sigma(x\otimes y
)=y\otimes x$ for all $x,y\in U_{q}(\widehat{sl_2})$ is used.\vspace{2mm} 

Then, by definition the intertwiner $R(u/v):{\cal V}_u\otimes {\cal V}_v\mapsto {\cal V}_v\otimes {\cal V}_u$ between two fundamental $U_{q}(\widehat{sl_2})-$evaluation representations obeys
\beqa
R(u/v)(\pi_u\times\pi_v)\big[\Delta(x)\big]= (\pi_u\times\pi_v)\big[\Delta'(x)\big]R(u/v) \qquad \forall x\in U_{q}(\widehat{sl_2})\ ,\label{condR}
\eeqa
where the (evaluation) endomorphism  $\pi_u: U_{q}(\widehat{sl_2}) \mapsto \mathrm{End}({\cal V}_u)$ is chosen such that $({\cal V}\equiv{\mathbb C}^2)$
\beqa 
&&\pi_u[E_1]= uq^{1/2}\sigma_+\ , \qquad \ \ \ \ \ \pi_u[E_0]= uq^{1/2}\sigma_-\ , \nonumber\\
&&\pi_u[F_1]=
u^{-1}q^{-1/2}\sigma_-\ ,\qquad \pi_u[F_0]= u^{-1}q^{-1/2}\sigma_+\ ,\nonumber\\
\ &&\pi_u[q^{H_1}]= q^{\sigma_3}\ ,\qquad \qquad \quad  \ \pi_u[q^{H_0}]=
q^{-\sigma_3}\ \label{evalrep}
\eeqa
in terms of the Pauli matrices $\sigma_\pm,\sigma_3$:
\beqa
\sigma_+=\left(
\begin{array}{cc}
 0    & 1 \\
 0 & 0 
\end{array} \right) \ ,\qquad
\sigma_- =\left(
\begin{array}{cc}
 0    & 0 \\
 1 & 0 
\end{array} \right) \ ,\qquad
\sigma_3 =\left(
\begin{array}{cc}
 1    & 0 \\
 0 & -1 
\end{array} \right) \ .\label{Pauli}\nonumber
\eeqa

As one can easily check, the matrix $R(u)$ given by (\ref{R}) indeed satisfies the required conditions (\ref{condR}). The tensor product ${\cal V}_u\otimes {\cal V}_v$ being indecomposable with respect to $U_{q}(\widehat{sl_2})$ evaluation representations for generic evaluation parameters $u,v$, the intertwiner $R(u)$ is unique (up to an overall scalar factor). As a consequence, it automatically satisfies the Yang-Baxter equation (\ref{YB}) which may be depicted by the following commutative diagram setting $w=1$:
\begin{equation}
\begin{CD}
  {\cal V}_u\otimes {\cal V}_v\otimes {\cal V}_w
  @>{R(u/v)\,\otimes\,\id}>>
  {\cal V}_v\otimes {\cal V}_u\otimes {\cal V}_w
  @>{id\,\otimes\, R(u/w)}>>
  {\cal V}_v\otimes {\cal V}_w\otimes {\cal V}_u
  \\
  @VV{\id\otimes R(v/w)}V
  @.
  @V{R(v/w)\otimes\id}VV
  \\
  {\cal V}_u\otimes {\cal V}_w\otimes {\cal V}_v
  @>{R(u/w)\,\otimes\,\id}>>
  {\cal V}_w\otimes {\cal V}_u\otimes {\cal V}_v
  @>{id\,\otimes\, R(u/v)}>>
  {\cal V}_w\otimes {\cal V}_v\otimes {\cal V}_u
\end{CD}
\end{equation}

\vspace{7mm}

{\bf b. The $K-$matrix as an intertwiner of ${\mathbb T}$}.\\
Tridiagonal algebras have been introduced and studied in
\cite{Ter93,Ter01,Ter03}, where they first appeared in the context
of $P-$ and $Q-$polynomial association schemes. A tridiagonal algebra is an associative algebra with unit which consists of two generators ${\textsf A}$ and ${\textsf A}^*$ called the standard generators. In general, the defining relations depend on five scalars $\rho,\rho^*,\gamma,\gamma^*$ and $\beta$. In the following, we will focus on the {\it reduced} parameter sequence $\gamma=0,\gamma^*=0$, $\beta=q^2+q^{-2}$ and $\rho=\rho^*$  which exhibits all interesting properties that can be extended to more general parameter sequences.
We call the corresponding algebra the $q-$Onsager algebra denoted ${\mathbb T}$, in view of its closed relationship with the Onsager algebra \cite{Ons44} and the Dolan-Grady relations \cite{DG}. In particular, the isomorphism between the Onsager and Dolan-Grady  algebraic structures has been studied in \cite{Pe,APMPT,Da} and shown explicitly in \cite{DateRoan}.
\begin{defn}[see also \cite{Ter03}]
The $q-$Onsager algebra $\mathbb{T}$ is the associative algebra with unit and standard generators $\textsf{A},\textsf{A}^*$ subject to the following relations
\beqa
[\textsf{A},[\textsf{A},[\textsf{A},\textsf{A}^*]_q]_{q^{-1}}]=\rho[\textsf{A},\textsf{A}^*]\
,\qquad
[\textsf{A}^*,[\textsf{A}^*,[\textsf{A}^*,\textsf{A}]_q]_{q^{-1}}]=\rho[\textsf{A}^*,\textsf{A}]\
\label{qDG} . \eeqa
\end{defn}
\begin{rem} For $\rho=0$ the relations (\ref{qDG})
reduce to the $q-$Serre relations of $U_{q}(\widehat{sl_2})$. For $q=1$, $\rho=16$ they coincide with the Dolan-Grady relations \cite{DG}.
\end{rem}

By analogy with the construction described above for the $R-$matrix and along the lines described in \cite{DM,DelG}, an intertwiner for ${\mathbb T}$ can be easily constructed. Before, we need to introduce the concept of comodule algebra using the analogue of the Hopf's algebra coproduct action called the coaction map. 
\begin{defn}[\cite{Chari}] Given a Hopf algebra
${\cal H}$ with comultiplication $\Delta$ and counit ${\cal E}$, ${\cal I}$ is
called a left ${\cal H}-$comodule if there exists a left
coaction map $\delta:\ \ {\cal I}\rightarrow {\cal H}
\otimes {\cal I}$ such that
\beqa (\Delta \times id)\circ
\delta=(id\times\delta)\circ\delta\ ,\qquad
({\cal E} \times id)\circ \delta \cong id \
.\label{defcoaction}\eeqa
Right ${\cal H}-$comodules are defined similarly.
\end{defn}
\begin{prop}[see also \cite{B}] 
Let $k_\pm\in{\mathbb C}^*$ and set $\rho\equiv k_+k_-(q+q^{-1})^2$. The q-Onsager algebra ${\mathbb T}$ is a left $U_{q}(\widehat{sl_2})-$comodule algebra with coaction map $\delta: {\mathbb T}\rightarrow U_{q}(\widehat{sl_2})\otimes {\mathbb T}$
such that
\beqa
\delta({\textsf A})&=& (k_+E_1q^{H_1/2} + k_-F_1q^{H_1/2})\otimes 1 + q^{H_1} \otimes {\textsf A}\ ,\nonumber\\
\delta({\textsf A}^*)&=& (k_-E_0q^{H_0/2} + k_+F_0q^{H_0/2})\otimes 1 + q^{H_0} \otimes {\textsf A}^*\ .\label{deltadef}
\eeqa
\end{prop}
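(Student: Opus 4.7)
The plan is to verify three things in turn: (a) that $\delta$ satisfies the coassociativity axiom, (b) the counit axiom, and (c) that $\delta$ is an algebra homomorphism from $\mathbb{T}$ to $U_{q}(\widehat{sl_2})\otimes \mathbb{T}$, i.e.\ that $\delta({\textsf A})$ and $\delta({\textsf A}^*)$ satisfy the $q$-Dolan-Grady relations (\ref{qDG}).

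For (a) and (b) it suffices to check the conditions on the generators ${\textsf A}, {\textsf A}^*$. Using the coproduct (\ref{coprod}) one obtains $\Delta(E_i q^{H_i/2}) = E_i q^{H_i/2} \otimes 1 + q^{H_i} \otimes E_i q^{H_i/2}$, and similarly for $F_i q^{H_i/2}$, so
\[
(\Delta\otimes \id)\delta({\textsf A})=X\otimes 1\otimes 1 + q^{H_1}\otimes X\otimes 1 + q^{H_1}\otimes q^{H_1}\otimes {\textsf A} = (\id\otimes \delta)\delta({\textsf A}),
\]
with $X=k_+E_1q^{H_1/2}+k_-F_1q^{H_1/2}$, and an identical check works for ${\textsf A}^*$. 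The counit axiom is immediate from ${\cal E}(E_i)={\cal E}(F_i)=0$ and ${\cal E}(q^{H_i})=1$.

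The substantive step is (c). Write $\delta({\textsf A})=X\otimes 1+q^{H_1}\otimes {\textsf A}$ and $\delta({\textsf A}^*)=Y\otimes 1+q^{H_0}\otimes {\textsf A}^*$ with $Y=k_-E_0q^{H_0/2}+k_+F_0q^{H_0/2}$. I would then compute the nested $q$-commutator $[\delta({\textsf A}),[\delta({\textsf A}),[\delta({\textsf A}),\delta({\textsf A}^*)]_q]_{q^{-1}}]$ inductively, sorting the resulting terms by their $\mathbb{T}$-content. Terms that sit in $U_{q}(\widehat{sl_2})\otimes 1$ reduce, after moving Cartan factors through $E_i,F_i$ via $q^{H_i}E_j q^{-H_i}=q^{a_{ij}}E_j$ and the corresponding formula for $F_j$, to a linear combination of the $q$-Serre expressions in (\ref{defUq}), hence vanish. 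Terms containing $[{\textsf A},{\textsf A}^*]$ — produced by the interaction of the $q^{H_1}\otimes {\textsf A}$ and $q^{H_0}\otimes {\textsf A}^*$ pieces — must match the right-hand side $\rho[\delta({\textsf A}),\delta({\textsf A}^*)]$ after applying (\ref{qDG}) in $\mathbb{T}$. The mixed terms, linear in ${\textsf A}$ or ${\textsf A}^*$, should cancel identically.

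The main obstacle is precisely the cancellation of these mixed cross-terms: each level of the nested commutator generates contributions that mix the $U_{q}(\widehat{sl_2})$ factor with the $\mathbb{T}$ factor through the Cartan conjugation, and keeping track of the accumulated $q$-weights is combinatorially dense. The fact that the coefficient in front of $[{\textsf A},{\textsf A}^*]$ on the right comes out to exactly $k_+k_-(q+q^{-1})^2$ is what pins down the identification $\rho = k_+k_-(q+q^{-1})^2$ and also forces the precise linear combinations $k_+E_i q^{H_i/2} + k_\mp F_i q^{H_i/2}$ appearing in the definition of $\delta$; the detailed calculation is the one carried out in \cite{B}.
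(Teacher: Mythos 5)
Your overall structure matches the paper's proof: the paper likewise splits the verification into the comodule axioms (\ref{defcoaction}), which it calls immediate from (\ref{coprod}), and the check that $\delta(\textsf{A}),\delta(\textsf{A}^*)$ satisfy (\ref{qDG}), which it declares ``long but straightforward'' and defers to references. Your parts (a) and (b) are correct, and in fact more explicit than the paper's one-line dismissal.

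The gap is in your sketch of part (c), and it is not cosmetic: the sorting of terms you describe is wrong, and it mischaracterizes the one genuinely hard input. Write $\delta(\textsf{A})=X\otimes 1+q^{H_1}\otimes\textsf{A}$ and $\delta(\textsf{A}^*)=Y\otimes 1+q^{H_0}\otimes\textsf{A}^*$. Sorting both sides of the required identity by $\mathbb{T}$-content, the component of $[\delta(\textsf{A}),[\delta(\textsf{A}),[\delta(\textsf{A}),\delta(\textsf{A}^*)]_q]_{q^{-1}}]$ in $U_{q}(\widehat{sl_2})\otimes 1$ is exactly $[X,[X,[X,Y]_q]_{q^{-1}}]\otimes 1$, while the same component of $\rho[\delta(\textsf{A}),\delta(\textsf{A}^*)]$ is $\rho[X,Y]\otimes 1$, which is \emph{nonzero} (for instance $[E_1q^{H_1/2},F_0q^{H_0/2}]=(q-q^{-1})F_0E_1q^{(H_0+H_1)/2}\neq 0$). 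So this sector does not vanish, contrary to your claim: the homomorphism property forces the identity $[X,[X,[X,Y]_q]_{q^{-1}}]=\rho[X,Y]$ inside $U_{q}(\widehat{sl_2})$. That identity is not ``a linear combination of the $q$-Serre expressions (\ref{defUq})'': only the monomials built purely from $E_1,E_0$ or purely from $F_1,F_0$ are killed by $q$-Serre, whereas the mixed terms survive because $[E_1,F_1]$ produces Cartan contributions $(q^{H_1}-q^{-H_1})/(q-q^{-1})$, and it is precisely these that assemble into $\rho[X,Y]$ and fix $\rho=k_+k_-(q+q^{-1})^2$. This statement --- that the coideal combinations $X,Y$ themselves obey the $q$-Dolan-Grady relations in $U_{q}(\widehat{sl_2})$ --- is the main theorem of \cite{B}, not a formal corollary of $q$-Serre. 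Your bookkeeping of the remaining sectors is likewise inconsistent: the terms linear in $\textsf{A}$ or $\textsf{A}^*$ cannot ``cancel identically,'' since the right-hand side contains the nonzero pieces $\rho[X,q^{H_0}]\otimes\textsf{A}^*$ and $\rho[q^{H_1},Y]\otimes\textsf{A}$; and the terms carrying $[\textsf{A},\textsf{A}^*]$ must match only $\rho\,q^{H_0+H_1}\otimes[\textsf{A},\textsf{A}^*]$, not the whole right-hand side. The workable plan equates the two sides sector by sector: the $U_{q}(\widehat{sl_2})\otimes 1$ sector via \cite{B}'s identity, the sectors linear in $\textsf{A},\textsf{A}^*$ via Cartan commutation, and the quadratic sector using (\ref{qDG}) in $\mathbb{T}$ (noting that the degree-four term $q^{3H_1+H_0}\otimes[\textsf{A},[\textsf{A},[\textsf{A},\textsf{A}^*]_q]_{q^{-1}}]$ carries a different Cartan factor than the target, so it must recombine with the other quadratic terms). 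Since the paper also omits this computation, your proposal is at the same level of completeness only after this roadmap is repaired.
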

\begin{proof}
The verification of the comodule algebra axioms (\ref{defcoaction}) is immediate using (\ref{coprod}). One also has to check that $\delta$ is an algebra homomorphism i.e $\delta({\textsf A}),\delta({\textsf A}^*)$ satisfy (\ref{qDG}). This calculation is rather long but straightforward, so we omit the details (see also \cite{B2,TD}).
\end{proof}
Having identified such a coaction map, we are now in position to consider an intertwiner relating representations of ${\mathbb T}$, a key ingredient in relating the $q-$Onsager algebra and the reflection equation algebra.
\begin{prop}\label{propT}
Let $\pi_u: U_{q}(\widehat{sl_2}) \mapsto \mathrm{End}({\cal V}_u)$ be the evaluation endomorphism for ${\cal V}\equiv {\mathbb C}^2$. Let $W$ denote a vector space over ${\mathbb C}$ on which the elements of ${\mathbb T}$ act. There exists an intertwinner 
\beqa
K(u): {\cal V}_u\otimes W \mapsto {\cal V}_{u^{-1}}\otimes W \nonumber
\eeqa
such that
\beqa
K(u)(\pi_u\times id)\big[\delta(a)\big]= (\pi_{u^{-1}}\times id)\big[\delta(a)\big]K(u)\ , \qquad \forall a\in {\mathbb T}\ .\label{condK}
\eeqa
It is unique (up to an overall scalar factor), and it satisfies the reflection equation (\ref{RE}).
\end{prop}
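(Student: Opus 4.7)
My plan is to adapt Jimbo's construction of the $R$-matrix (recalled in part (a) of this section) to the comodule setting. Write $K(u) = \sum_{i,j\in\{1,2\}} E_{ij}\otimes K_{ij}(u)$ with operator-valued entries $K_{ij}(u)$ acting on $W$. Since $\textsf{A}, \textsf{A}^*$ generate $\mathbb{T}$, the intertwining condition (\ref{condK}) only needs to be verified for $a \in \{\textsf{A}, \textsf{A}^*\}$, yielding two matrix equations. Substituting (\ref{evalrep}) and (\ref{deltadef}) converts these into a closed system of eight scalar relations: the off-diagonal entries $K_{12}(u), K_{21}(u)$ appear only through $q$-commutators with $\textsf{A}$ and $\textsf{A}^*$, while the diagonal entries $K_{11}(u), K_{22}(u)$ obey ordinary commutator relations with the same generators.

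For existence, I would take the ansatz (\ref{m1})--(\ref{m4}) of \lemref{lem} with the identifications $\mathcal{W}_0 \mapsto \textsf{A}$, $\mathcal{W}_1 \mapsto \textsf{A}^*$, and define the higher generators $\mathcal{W}_{-k}, \mathcal{W}_{k+1}, \mathcal{G}_{k+1}, \tilde{\mathcal{G}}_{k+1}$ of $\mathcal{A}_q$ recursively via (\ref{qo2})--(\ref{qo3}). The $q$-Dolan--Grady relations (\ref{qDG}) for $\textsf{A}, \textsf{A}^*$ should then guarantee that the entire list (\ref{qo1})--(\ref{qo11}) of $\mathcal{A}_q$ relations is satisfied by this recursively-built tower; via \thmref{th2} and \thmref{th1}, the resulting $K(u)$ is automatically a solution of (\ref{RE}), and checking the intertwining system identified in the previous paragraph reduces to an order-by-order verification in $U^{-k}$. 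For uniqueness, if $K'(u)$ is another intertwiner then $M(u) := K'(u) K(u)^{-1}$ commutes with $(\pi_{u^{-1}}\otimes\id)\delta(\textsf{A})$ and $(\pi_{u^{-1}}\otimes\id)\delta(\textsf{A}^*)$; because the off-diagonal blocks of these $2\times 2$ matrices contain the nonzero scalars $k_\pm$ mixing the two basis vectors of ${\cal V}$ nondegenerately, a centralizer argument forces $M(u)$ to act as a scalar on the ${\cal V}_{u^{-1}}$ factor for generic $u$, so $K(u)$ is unique up to an overall scalar.

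For the reflection equation, both sides of (\ref{RE}), viewed as maps ${\cal V}_u \otimes {\cal V}_v \otimes W \to {\cal V}_{v^{-1}} \otimes {\cal V}_{u^{-1}} \otimes W$, intertwine the composite coaction $(\pi_u \otimes \pi_v \otimes \id)(\Delta \otimes \id)\delta(a)$ for every $a \in \mathbb{T}$. This is obtained by chaining the two intertwining properties: $R$ intertwines $\Delta$ (Jimbo) on the ${\cal V}$-factors, $K$ intertwines $\delta$ on $W$, and the comodule axioms (\ref{defcoaction}) make the three-step action coherent. Extending the uniqueness argument of the previous paragraph to this three-fold space -- using irreducibility of ${\cal V}_u \otimes {\cal V}_v$ as a $U_q(\widehat{sl_2})$-module for generic $u, v$ -- equates the two sides up to a scalar, which is fixed to $1$ by comparing the leading power of $u$ (or $v$). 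The main obstacle is the existence step, specifically verifying that the $q$-Dolan--Grady relations (\ref{qDG}) alone imply the \emph{entire} list (\ref{qo1})--(\ref{qo11}) of $\mathcal{A}_q$ relations for the higher generators built recursively from $\textsf{A}, \textsf{A}^*$; once this algebraic content is in place, uniqueness and the reflection equation follow cleanly by the Schur-type arguments sketched above.
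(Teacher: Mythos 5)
Your uniqueness and reflection-equation steps are essentially the paper's own arguments: uniqueness is Schur's lemma applied to the indecomposable ${\cal V}_u\otimes W$ for generic $u$ (though note your variant, forming $M(u)=K'(u)K(u)^{-1}$, silently assumes $K(u)$ is invertible, which the paper's formulation does not need), and the reflection equation follows by chaining the intertwining properties of $R$ and $K$ on ${\cal V}_u\otimes{\cal V}_v\otimes W$ and invoking the same Schur-type uniqueness -- exactly the commutative diagram in the paper's proof. The genuine problem is your existence step, and it is precisely the point you flag as ``the main obstacle'': you propose to build the tower $\{{\cW}_{-k},{\cW}_{k+1},{\cG}_{k+1},{\tilde{\cG}}_{k+1}\}$ recursively \emph{inside} ${\mathbb T}$ from ${\textsf A},{\textsf A}^*$ via (\ref{qo2})--(\ref{qo3}), and then need the $q$-Dolan--Grady relations (\ref{qDG}) alone to imply the full list (\ref{qo1})--(\ref{qo11}) for these recursively defined elements. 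That assertion is equivalent to constructing a homomorphism ${\cal A}_q\to{\mathbb T}$, i.e.\ to expressing all higher elements of ${\cal A}_q$ in terms of ${\cW}_0,{\cW}_1$ compatibly with the relations -- a problem the paper explicitly declares open and defers (it is tied to the unconstructed Poincar\'e--Birkhoff--Witt basis of ${\cal A}_q$). So your plan hinges on an unproved claim, and nothing in your sketch supplies it.

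The paper's existence proof avoids this entirely by running the homomorphism in the opposite, easy, direction. One only needs that ${\cW}_0,{\cW}_1\in{\cal A}_q$ satisfy (\ref{qDG}), which follows at once from (\ref{g1}) together with (\ref{qo2}) and (\ref{qo4}); this gives the homomorphism (\ref{h}), ${\textsf A}\mapsto{\cW}_0$, ${\textsf A}^*\mapsto{\cW}_1$, from ${\mathbb T}$ \emph{into} ${\cal A}_q$. The intertwiner is then taken with entries (\ref{m1})--(\ref{m4}) valued in ${\cal A}_q$, and the intertwining system (\ref{al}) (plus its ${\textsf A}^*$-counterpart with $q\to q^{-1}$, $u\to u^{-1}$) is verified directly from the defining relations (\ref{qo1})--(\ref{qo3}) of ${\cal A}_q$; at no point is the converse implication (\ref{qDG}) $\Rightarrow$ (\ref{qo1})--(\ref{qo11}) required. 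If you reorganize your existence step this way -- check (\ref{al}) inside ${\cal A}_q$ rather than trying to rebuild ${\cal A}_q$ inside ${\mathbb T}$ -- the remainder of your argument goes through; as written, it does not.
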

\begin{proof}
First, let us identify one solution of (\ref{condK}). By definition, ${\cal V}_u$ is a two-dimensional vector space. Then $K(u)$ is a $2\times 2$ matrix, which entries are formal power series in the variable $u$ in view of (\ref{evalrep}) and (\ref{deltadef}). Define
\beqa
K(u)=\left(
\begin{array}{cc}
 A(u)    & B(u) \\
 C(u) & D(u) 
\end{array} \right)\ .\nonumber
\eeqa
Replacing $K(u)$ in (\ref{condK}), we find that the entries must satisfy the following system of equations
\beqa
&&\big[{\textsf A},A(u)\big] = q^{-1}u^{-1}\big(k_-B(u) - k_+C(u)\big)\ ,\nonumber\\
&&\big[{\textsf A},D(u)\big] = -qu\big(k_-B(u) - k_+ C(u)\big)\ ,\label{al}\\
&&\big[{\textsf A},B(u)\big]_q = k_+\big(uA(u) - u^{-1}D(u)\big)\ ,\nonumber\\
&&\big[{\textsf A},C(u)\big]_{q^{-1}} = -k_-\big(uA(u)-u^{-1}D(u) \big)\ \nonumber
\eeqa
and similar relations for ${\textsf A}^*$, provided one substitutes $q\rightarrow q^{-1},u\rightarrow u^{-1}$ in (\ref{al}). Then, using (\ref{g1}) in (\ref{qo2}) for $k=0$ it is easy to notice that the defining relations (\ref{qDG}) are nothing but (\ref{qo4}) for $k=0,l=1$, provided we consider the following homomorphism
\beqa
{\textsf A}\mapsto {\cW}_0\ ,
\qquad {\textsf A}^*\mapsto {\cW}_1\ .\label{h}
\eeqa
Now, identify the entries of $K(u)$ with (\ref{m1})-(\ref{m4}). Expanding and using the defining relations (\ref{qo1})-(\ref{qo3}) of the algebra ${\cal A}_q$, it is easy to check (\ref{al}) as well as all other relations for ${\textsf A}^*$. So, at least one solution $K(u)$ exists and it is written in terms of elements of ${\cal A}_q$. For generic $u$, the tensor product $\mathrm{End}({\cal V}_u)\otimes W$ is not decomposable with respect to ${\mathbb T}$ representations. By Schur's lemma, this means that given $W$, the solution to the intertwining relation (\ref{condK}) is unique (up to an overall scalar factor). It remains to show that $K(u)$ satisfying (\ref{condK}) is automatically a solution of the reflection equation algebra (\ref{RE}). To this end, let us recall that $K(u): {\cal V}_u\otimes W \mapsto {\cal V}_{u^{-1}}\otimes W$ and $R(u/v):{\cal V}_u\otimes {\cal V}_v\mapsto {\cal V}_v\otimes {\cal V}_u$. Then, the proof that this solution $K(u)$ satisfies the reflection equation (\ref{RE}) follows from the commutativity of the following diagram (up to an overall scalar factor):
\begin{equation}
\begin{CD}
  {\cal V}_u\otimes {\cal V}_v\otimes W
  @>{id\,\otimes\, K(v)}>>
  {\cal V}_u\otimes {\cal V}_{v^{-1}}\otimes W
  @>{R(uv)\,\otimes\, id}>>
  {\cal V}_{v^{-1}}\otimes {\cal V}_u \otimes W
  \\
  @VV{R(u/v)\otimes id}V
  @.
  @V{id\otimes K(u)}VV
  \\
  {\cal V}_v\otimes {\cal V}_u\otimes W
  @.
  \qquad
  @. 
  {\cal V}_{v^{-1}}\otimes {\cal V}_{u^{-1}} \otimes W
  \\
  @VV{id\otimes K(u)}V
  @.
  @V{R(u/v)\otimes id}VV
  \\
  {\cal V}_v\otimes {\cal V}_{u^{-1}}\otimes W
  @>{R(uv)\,\otimes\,\id}>>
  {\cal V}_{u^{-1}}\otimes {\cal V}_v\otimes W
  @>{id\,\otimes\, K(v)}>>
  {\cal V}_{u^{-1}}\otimes {\cal V}_{v^{-1}}\otimes W  \nonumber
\end{CD}
\end{equation}
\end{proof}

\vspace{1mm}

Combining previous results, we obtain the third main result of the paper:
\begin{thm}\label{th3} The $q-$Onsager algebra ${\mathbb T}$ and the current algebra $O_q(\widehat{sl_2})$ are isomorphic.
\end{thm}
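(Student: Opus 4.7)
The plan is to reduce the statement to proving $\mathbb{T} \isom \mathcal{A}_q$, since Theorem \ref{th2} already identifies $\mathcal{A}_q$ with $O_q(\widehat{sl_2})$. I would build mutually inverse homomorphisms by combining the intertwiner construction of Proposition \ref{propT} with the expansion arguments of Theorem \ref{th2}.

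First I would define $h \colon \mathbb{T} \to \mathcal{A}_q$ by $\textsf{A} \mapsto \mathcal{W}_0$ and $\textsf{A}^* \mapsto \mathcal{W}_1$, as already anticipated in (\ref{h}). Well-definedness amounts to verifying the tridiagonal relations (\ref{qDG}) for $\mathcal{W}_0,\mathcal{W}_1$ inside $\mathcal{A}_q$; this is precisely the calculation implicit in the proof of Proposition \ref{propT}, where substituting (\ref{g1}) into the $k=0$ cases of (\ref{qo2})--(\ref{qo3}) reproduces exactly (\ref{qDG}). Surjectivity of $h$ is then immediate from the recursive structure of $\mathcal{A}_q$: (\ref{g1}) expresses $\mathcal{G}_1,\tilde{\mathcal{G}}_1$ as polynomials in $\mathcal{W}_0,\mathcal{W}_1$, and (\ref{qo2})--(\ref{qo3}) generate all $\mathcal{W}_{-k-1},\mathcal{W}_{k+2},\mathcal{G}_{k+2},\tilde{\mathcal{G}}_{k+2}$ inductively from lower-indexed elements.

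The delicate step is injectivity, since no explicit PBW basis for $\mathcal{A}_q$ is yet available. My approach is to construct the inverse map $\bar{h} \colon \mathcal{A}_q \to \mathbb{T}$ directly from the intertwiner $K(u)$ of Proposition \ref{propT}. Its entries lie in $\mathbb{T}$, they admit the unique decomposition (\ref{m1})--(\ref{m4}) by Lemma \ref{lem}, and $K(u)$ solves the reflection equation (\ref{RE}). Theorem \ref{th1} then forces the resulting currents $\mathcal{W}_\pm(u),\mathcal{G}_\pm(u) \in \mathbb{T}[[u,u^{-1}]]$ to satisfy every relation (\ref{ec1})--(\ref{ec16}) of $O_q(\widehat{sl_2})$; expanding via (\ref{c1})--(\ref{c2}) and extracting Laurent coefficients (as in the proof of Theorem \ref{th2} via Cauchy's theorem) defines $\bar{h}$ on all generators of $\mathcal{A}_q$. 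By construction $\bar{h}(\mathcal{W}_0)=\textsf{A}$ and $\bar{h}(\mathcal{W}_1)=\textsf{A}^*$, so $\bar{h}\circ h = \mathrm{id}_{\mathbb{T}}$ on the standard generators, and $h\circ\bar{h} = \mathrm{id}_{\mathcal{A}_q}$ then follows from the surjectivity established above.

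The main obstacle will be verifying that $\bar{h}$ is a genuine ring homomorphism and not merely an assignment on generators. This is where the chain of uniqueness and compatibility results --- Lemma \ref{lem}, Theorem \ref{th1}, and Theorem \ref{th2} --- must be combined with care: the image under $\bar{h}$ of each defining relation of $\mathcal{A}_q$ has to reduce, after inserting the explicit formulas for $\mathcal{W}_\pm(u),\mathcal{G}_\pm(u)$ coming from the entries of $K(u)$, to an identity already encoded in the reflection equation satisfied by $K(u)$. Once this compatibility is established, the composition $\mathbb{T} \isom \mathcal{A}_q \isom O_q(\widehat{sl_2})$ completes the proof.
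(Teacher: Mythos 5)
Your route is genuinely different from the paper's, and its key step fails. The paper never constructs a homomorphism between $\mathbb{T}$ and $\mathcal{A}_q$ in either direction: its proof of Theorem \ref{th3} composes Proposition \ref{propT} (the intertwiner $K(u)$ of $\mathbb{T}$ is unique by Schur's lemma and satisfies (\ref{RE}), which is taken to identify $\mathbb{T}$ with the reflection equation algebra $B_q(\widehat{sl_2})$) with the isomorphism $B_q(\widehat{sl_2})\cong O_q(\widehat{sl_2})$ of Theorem \ref{th1}; the algebra $\mathcal{A}_q$ and Theorem \ref{th2} play no role in it. Your reduction to $\mathbb{T}\cong\mathcal{A}_q$ requires exactly the statement that the paper explicitly postpones as an open problem in the paragraph following Theorem \ref{th3}: ``to construct an explicit homomorphism from $\mathcal{A}_q$ to $\mathbb{T}$, i.e.\ to write all higher elements of $\mathcal{A}_q$ solely in terms of $\mathcal{W}_0,\mathcal{W}_1$'' is deferred to future work, not proved.

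The gap in your surjectivity argument is concrete. From $\mathcal{W}_{-k},\mathcal{W}_{k+1},\mathcal{G}_{k+1},\tilde{\mathcal{G}}_{k+1}$, the relations (\ref{qo2}) and (\ref{qo3}) do produce $\mathcal{W}_{-k-1}$ and $\mathcal{W}_{k+2}$, but the only defining relation that reaches the next pair of $\mathcal{G}$'s is (\ref{qo1}), and it determines only the difference $\tilde{\mathcal{G}}_{k+2}-\mathcal{G}_{k+2}=(q+q^{-1})\big[\mathcal{W}_0,\mathcal{W}_{k+2}\big]$, never $\mathcal{G}_{k+2}$ and $\tilde{\mathcal{G}}_{k+2}$ separately; the induction stalls already in passing from $k=0$ to $k=1$ (the case (\ref{g1}) is special, and the paper's Corollary stating (\ref{order}) rests on the same incomplete induction). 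Your construction of $\bar h$ is circular for the same reason: it needs the entries of $K(u)$ to lie in $\mathbb{T}$, but Proposition \ref{propT} builds those entries out of elements of $\mathcal{A}_q$ acting on $W$ via (\ref{m1})--(\ref{m4}) and (\ref{c1})--(\ref{c2}), and knowing that they lie in the image of $\mathbb{T}$ under (\ref{h}) is equivalent to the surjectivity you are trying to establish. This is not repairable: later work of Terwilliger shows that $\mathcal{A}_q$ is a nontrivial central extension of the $q$-Onsager algebra, $\mathcal{A}_q\cong\mathbb{T}\otimes\mathbb{C}[z_1,z_2,\dots]$, so the subalgebra generated by $\mathcal{W}_0,\mathcal{W}_1$ is proper; one can arrange $\bar h\circ h=\mathrm{id}_{\mathbb{T}}$, but $h$ is not surjective and $h\circ\bar h\neq\mathrm{id}_{\mathcal{A}_q}$, so no pair of mutually inverse maps of the kind you propose exists.
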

\begin{proof}
According to Proposition \ref{propT}, $K(u)$ with (\ref{m1})-(\ref{m4}) is the unique intertwiner of ${\mathbb T}$ satisfying (\ref{condK}). Also, it satisfies the reflection equation algebra (\ref{RE}). So, $K(u)$ establishes the isomorphism between ${\mathbb T}$ and the reflection equation algebra (\ref{RE}) for the $U_q(\widehat{sl_2})$ $R-$matrix.
Theorem \ref{th1} then establishes the isomorphism between the reflection equation algebra (\ref{RE}) and $O_q(\widehat{sl_2})$, which supports the claim. 
\end{proof}
Although the isomorphism between ${\mathbb T}$ and $O_q({\widehat{sl_2}})\cong {\cal A}_q$ is now established, an interesting problem remains to construct an explicit homomorphism from ${\cal A}_q$ to ${\mathbb T}$, i.e. to write all higher elements of ${\cal A}_q$ solely in terms of ${\cW}_0,{\cW}_1$. This problem will be considered elsewhere.\vspace{1mm} 

To conclude, the $q-$Onsager algebra ${\mathbb T}$ admits two different realizations: one [see Proposition \ref{propT}] in terms of the reflection equation algebra for the $U_q({\widehat{sl_2}})$ $R-$matrix and another one in terms [see Theorems \ref{th1}, \ref{th2}, \ref{th3}] of the current algebra $O_q({\widehat{sl_2}})\cong {\cal A}_q$.   
Previous results are resumed by the picture below.
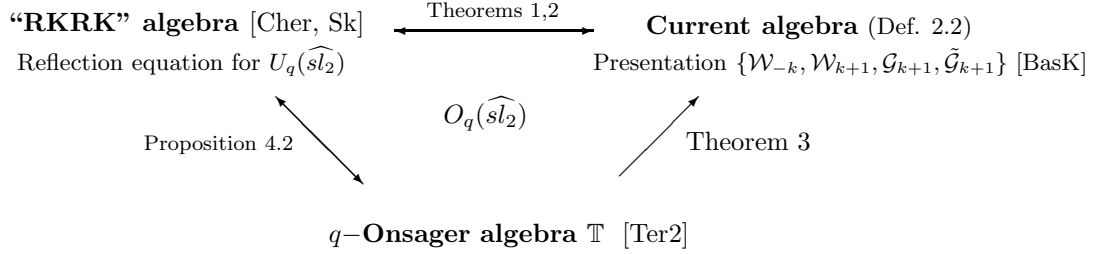
\begin{figure}[ht!]
\begin{center}
\begin{picture}(390,90)
   \put(5,60){\shortstack[1]{\ \bf``RKRK'' algebra \cite{Cher84,Skly88} \\
        \small Reflection equation for $U_q(\widehat{sl_2})$}}
   \put(225,75){\vector(-2,0){70}}
   \put(160,75){\vector(2,0){70}}
   \put(150,80){\shortstack[l]{ \\
                                 \footnotesize \qquad Theorems \ref{th1},\ref{th2}}}
   \put(230,60){\shortstack[l]{\quad \quad  {\bf Current algebra} \small (Def. \ref{defnCA})  \\
                                     \small Presentation \small $\{{\cW}_{-k},{\cW}_{k+1},{\cG}_{k+1},{\tilde{\cG}}_{k+1}\}$ \cite{BK}}} 
   \put(140,20){\vector(-1,1){30}}
   \put(113,47){\vector(1,-1){30}}
   \put(170,40){\shortstack[l]{\ {\bf $O_q({\widehat{sl_2}})$}}}                             
   \put(60,30){\shortstack[l]{\footnotesize Proposition \ref{propT}}}
   
   \put(240,20){\vector(1,1){30}}
   \put(265,30){\shortstack[l]{Theorem \ref{th3}}}

   \put(265,23){\shortstack[l]{}}
   \put(130,-5){\shortstack[l]{{\bf $q-$Onsager algebra ${\mathbb T}$} \ \cite{Ter03}}}
\end{picture}
\end{center}
\vspace{5mm}
\caption{\label{fig1}
An algebraic scheme for $O_q({\widehat{sl_2}})$}
\end{figure} 

\vspace{0.7cm}

%
\noindent{\bf Acknowledgements:} Part of this work has been supported by the ANR Research project ``{\it Boundary integrable models: algebraic structures and correlation functions}'', contract number JC05-52749. P.B thanks S. Pakuliak for detailed explanations and notes about Drinfeld's construction at the early stage of this work, as well as P. Terwilliger for reading the manuscript and helpful comments.
\vspace{0cm}

\vspace{0.2cm}

\end{document}